\definecolor{ired}{rgb}{0.9,0,0.1}
\newtcolorbox{Box1}[2][]{
	lower separated=false,
	colback=white!80!gray,
	colframe=white, fonttitle=\bfseries,
	colbacktitle=white!50!gray,
	coltitle=black,
	enhanced,
	attach boxed title to top left={xshift=0.5cm,yshift=-5mm},
	title=#2,#1}
\renewcommand{\l}{\left}
\renewcommand{\r}{\right}
\renewcommand{\E}{\mathbb{E}}
\renewcommand{\R}{\mathbb{R}}
\newcommand{\K}{\mathbb{K}}
\renewcommand{\P}{\mathbb{P}}
\newcommand{\W}{\mathcal{W}}
\newcommand{\X}{\mathcal{X}}
\renewcommand{\A}{\mathcal{A}}
\newcommand{\M}{\mathcal{M}}
\renewcommand{\L}{\mathcal{L}}
\title{Privacy Amplification via Iteration for Shuffled \\ and Online PNSGD}
\date{}
\author[1]{Matteo Sordello}
\author[2]{Zhiqi Bu}
\author[3]{Jinshuo Dong}
\affil[1]{Department of Statistics, University of Pennsylvania}
\affil[2]{Graduate Group in AMCS, University of Pennsylvania}
\affil[3]{IDEAL Institute, Northwestern University}
	\renewcommand\AB@affilsepx{: \protect\Affilfont}
	\affil[1]{\texttt{sordello@wharton.upenn.edu}}
	\affil[2]{\texttt{zbu@sas.upenn.edu}}
	\affil[3]{\texttt{jinshuo@northwestern.edu}}
\begin{document}

\maketitle

\begin{abstract}
In this paper, we consider the framework of privacy amplification via iteration, which is originally proposed by Feldman et~al.~and subsequently simplified by Asoodeh et~al.~in their analysis via the contraction coefficient.
This line of work focuses on the study of the privacy guarantees obtained by the projected noisy stochastic gradient descent (PNSGD) algorithm with hidden intermediate updates.
A limitation in the existing literature is that only the early stopped PNSGD has been studied, while no result has been proved on the more widely-used PNSGD applied on a shuffled dataset.
Moreover, no scheme has been yet proposed regarding how to decrease the injected noise when new data are received in an online fashion.
In this work, we first prove a privacy guarantee for shuffled PNSGD, which is investigated asymptotically when the noise is fixed for each sample size $n$ but reduced at a predetermined rate when $n$ increases, in order to achieve the convergence  of privacy loss. We then analyze the online setting and provide a faster decaying scheme for the magnitude of the injected noise that also guarantees the convergence of privacy loss.
\end{abstract}

\section{Introduction}
\label{sec_intro}

Differential privacy (DP) \citet{dwork2006calibrating}, \citet{dwork2006our} is a strong standard to guarantee the privacy for algorithms that have been widely applied to modern machine learning \citep{abadi2016deep}. It characterizes the privacy loss via statistical hypothesis testing, thus allowing the mathematically rigorous analysis of the privacy bounds. 
When multiple operations on the data are involved
 and each intermediate step is revealed, composition theorems can be used to keep track of the privacy loss, which combines subadditively \citep{kairouz2015composition}. However, because such results are required to be general, their associated privacy bounds are inevitably loose. In contrast, privacy amplification provides a privacy budget for a composition of mechanisms that is less that the budget of each individual operation, which strengthens the bound the more operations are concatenated. Classic examples of this feature are privacy amplification by subsampling \citep{chaudhuri2006random, balle2018privacy}, by shuffling \citep{erlingsson2019amplification} and by iteration \citep{feldman2018privacy, asoodeh2020privacy}.
In this paper, we focus on the setting of privacy amplification by iteration, and extend the analysis via contraction coefficient proposed by \citet{asoodeh2020privacy} to prove results that apply to an algorithm commonly used in practice, in which the entire dataset is shuffled before training a model with PNSGD.
We emphasize that the shuffling is a fundamental difference compared to previous work, since it is a necessary step in training many machine learning models.

We start by laying out the definitions that are necessary for our analysis.
We consider a convex function $f : \R^+ \to \R$ that satisfies $f(1) = 0$. \citet{ali1966general} and \citet{csiszar2004information} define the $f$-divergence between two probability distribution $\mu$ and $\nu$ is as
\begin{equation*}
\label{definition_f_divergence}
D_f(\mu \| \nu) = \E_{\nu}\l[f\l(\frac{\text{d}\mu}{\text{d}\nu}\r)\r]=\int f\l(\frac{\text{d}\mu}{\text{d}\nu}\r)d\nu
\end{equation*}
For a Markov kernel $K: \W \to \mathcal{P}(\W)$, where $\mathcal{P}(\W)$ is the space of probability measures over $\W$, we let $\eta_f(K)$ be the contraction coefficient of kernel $K$ under the $f$-divergence, which is defined as
\begin{equation*}
\label{definition_contraction_coefficient}
\eta_f(K) = \sup_{\mu, \nu : D_f(\mu \| \nu) \neq 0} \frac{D_f(\mu K \| \nu K)}{D_f(\mu \| \nu)}
\end{equation*}
If we now consider a sequence of Markov kernels $\{K_n\}$ and let the two sequences of measures $\{\mu_n\}$ and $\{\nu_n\}$ be generated starting from $\mu_0$ and $\nu_0$ by applying $\mu_n = \mu_{n-1} K_{n}$ and $\nu_n = \nu_{n-1} K_{n}$, then the strong data processing inequality \citep{raginsky2016strong} for the $f$-divergence tells us that
\begin{equation*}
\label{strong_data_processing_inequality}
D_f(\mu_n \| \nu_n) \leq D_f(\mu_0 \| \nu_0) \prod_{t=1}^{n} \eta_f(K_t)
\end{equation*}
Among the $f$-divergences, we focus on the $E_\gamma$-divergence, or hockey-stick divergence, which is the $f$-divergence associated with $f(t) = (t - \gamma)_+=\max(0,t-\gamma)$. We do so because of its nice connection with the concept of $(\epsilon, \delta)$ differential privacy, which is now the state-of-the-art technique to analyze the privacy loss that we incur when releasing information from a dataset.  A mechanism $\M$ is said to be $(\epsilon, \delta)$-DP if, for every pair of neighboring datasets (datasets that differ only in one entry, for which we write $D \sim D'$) and every event $\A$, one has
\begin{equation}
\label{definition_DP}
\P(\M(D) \in \A) \leq e^\epsilon \P(\M(D') \in \A) + \delta
\end{equation}
It is easy to prove that a mechanism $\M$ is $(\epsilon, \delta)$-DP if and only if the distributions that it generates on $D$ and $D'$ are close with respect to the $E_\gamma$-divergence. In particular, for $D \sim D'$ and $\P_D$ being the output distribution of mechanism $\M$ on $D$, then $\M$ is $(\epsilon, \delta)$-DP if and only if 
\begin{equation}
\label{connection_E_divergence_privacy}
E_{e^\epsilon}(\P_D \| \P_{D'}) \leq \delta .
\end{equation}
It has been proved in \citet{asoodeh2020privacy} that the contraction coefficient of a kernel $K: \W \to \mathcal{P}(\W)$ under $E_\gamma$-divergence, which we refer to as $\eta_\gamma(K)$, satisfies
\begin{equation*}
\label{contraction_coefficient_gamma_divergence}
\eta_\gamma(K) = \sup_{w_1, w_2 \in \W} E_\gamma(K(w_1) \| K(w_2))
\end{equation*}
This equality improves on a result proved by \citet{balle2019privacy} and makes it easier to find an explicit form for the contraction coefficient of those distributions for which we can compute the hockey-stick divergence. Two such distributions are the Laplace and Gaussian, and \citet{asoodeh2020privacy} investigate the privacy guarantees generated by this privacy amplification mechanism in the setting of PNSGD with Laplace or Gaussian noise. 
As the standard stochastic gradient descent (SGD), the PNSGD is defined with respect to a loss function $\ell : \W \times \X \to \R$ that takes as inputs a parameter in the space $\K \subseteq \W$ and an observation $x \in \X$. Common assumptions made on the loss functions are the following: for each $x \in \X$
\begin{itemize}
\item $\ell(\cdot, x)$ is $L$-Lipschitz \\[-8pt]
\item $\ell(\cdot, x)$ is $\rho$-strongly convex \\[-8pt]
\item $\nabla_w \ell(\cdot, x)$ is $\beta$-Lipschitz. 
\end{itemize}
The PNSGD algorithm works by combining three steps:  $(1)$ a stochastic gradient descent (SGD) step with learning rate $\eta$; $(2)$ an injection of i.i.d. noise sampled from a known distribution to guarantee privacy and $(3)$ a projection $\Pi_{\K}: \W \to \K$ onto the subspace $\K$. Combined, these steps give the following update rule
\begin{Box1}{PNSGD}
\begin{equation*}
\label{PNSGD_updates}
w_{t+1} = \Pi_{\K}\l(w_t - \eta (\nabla_w\ell(w_t, x_{t+1}) + Z_{t+1})\r)
\end{equation*}
\end{Box1}
which can be defined as a Markov kernel by assuming that $w_0 \sim \mu_0$ and $w_{t} \sim \mu_t = \mu_0 K_{x_1} ... K_{x_t}$, where $K_x$ is the kernel associated to a single PNSGD step when observing the data point $x$. With this definition, one can find an upper bound for $\delta$ by bounding the left hand side of (\ref{connection_E_divergence_privacy}). The specific bound depends on the index at which the neighboring datasets $D$ and $D'$ differ and the distribution of the noise injected in the PNSGD. \citet{asoodeh2020privacy} investigate the bound for both Laplace and Gaussian noise, which we report in the following theorem.
\begin{theorem}[Theorem 3 and 4 in \citet{asoodeh2020privacy}]
\label{thm:harvard}
Define
\begin{equation*}
\label{definition_function_Q}
Q(t) = \frac{1}{\sqrt{2\pi}} \int_t^{\infty} e^{-\frac{u^2}{2}} du = 1-\Phi(t)
\end{equation*}
where $\Phi$ is the cumulative density function of the standard normal,
\begin{equation}
\label{definition_function_theta}
\theta_\gamma(r) = Q\l(\frac{\log(\gamma)}{r} - \frac r2\r) - \gamma Q\l(\frac{\log(\gamma)}{r} + \frac r2\r)
\end{equation}
and the constant
\begin{equation*}
\label{definition_function_M}
M = \sqrt{1 - \frac{2\eta\beta\rho}{\beta + \rho}}
\end{equation*}
which depends on the parameters of the loss function and the learning rate of the SGD step.
If $\K \subset \R^d$ is compact and convex with diameter $D_\K$, the PNSGD algorithm with Gaussian noise $N(0, \sigma^2)$ is $(\epsilon, \delta)$-DP for its i-th entry where $\epsilon \geq 0$ and 
\begin{equation*}
\label{delta_PNSGD_gaussian}
\delta = \theta_{e^\epsilon}\l(\frac{2L}{\sigma}\r)  \theta_{e^\epsilon}\l(\frac{M D_{\mathbb{K}}}{\eta \sigma}\r)^{n-i}
\end{equation*}
If instead we consider $\K = [a, b]$ for $a < b$, then the PNSGD algorithm with Laplace noise $\L(0, v)$ is $(\epsilon, \delta)$-DP for its i-th entry where $\epsilon \geq 0$ and 
\begin{equation*}
\label{delta_PNSGD_laplace}
\delta = \l(1 - e^{\frac{\epsilon}{2} - \frac Lv}\r)_+ \l(1 - e^{\frac \epsilon2 - \frac{M(b-a)}{2\eta v}}\r)_+^{n-i}
\end{equation*}
\end{theorem}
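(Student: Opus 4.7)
The plan is to apply the strong data processing inequality for the $E_{e^\epsilon}$-divergence to the Markov chain generated by the PNSGD updates, exploiting the fact that the neighboring datasets $D$ and $D'$ differ only at index $i$. Before step $i$, both chains use identical kernels $K_{x_1},\dots,K_{x_{i-1}}$, so $\mu_{i-1}=\nu_{i-1}$ and $E_{e^\epsilon}(\mu_{i-1}\|\nu_{i-1})=0$. Applying the SDPI from step $i$ onward then yields
\begin{equation*}
E_{e^\epsilon}(\mu_n\|\nu_n) \leq E_{e^\epsilon}(\mu_i\|\nu_i) \prod_{t=i+1}^n \eta_{e^\epsilon}(K_{x_t}),
\end{equation*}
and in view of (\ref{connection_E_divergence_privacy}) the proof reduces to bounding the leading one-step divergence (where the data differ) and each of the contraction coefficients (where the data coincide).

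For the leading factor in the Gaussian case, condition on $w_{i-1}=w$: the outputs $\mu_i(\cdot|w)$ and $\nu_i(\cdot|w)$ are the laws of $\Pi_\K(w-\eta\nabla\ell(w,x_i)-\eta Z)$ and $\Pi_\K(w-\eta\nabla\ell(w,x_i')-\eta Z)$ with $Z\sim N(0,\sigma^2 I)$. Before projection these are Gaussians with common covariance $\eta^2\sigma^2 I$ and mean gap at most $\eta\|\nabla\ell(w,x_i)-\nabla\ell(w,x_i')\|\leq 2\eta L$ by the Lipschitz hypothesis. A direct calculation shows that $E_\gamma$ between two Gaussians of equal covariance $\sigma_0^2 I$ and mean shift $\Delta$ equals $\theta_\gamma(\Delta/\sigma_0)$, which here gives $\theta_{e^\epsilon}(2L/\sigma)$ (the factor of $\eta$ cancels). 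Projection is a deterministic post-processing and so does not increase $E_\gamma$, and joint convexity of $E_\gamma$ passes the conditional bound to the unconditional law, giving $E_{e^\epsilon}(\mu_i\|\nu_i)\leq\theta_{e^\epsilon}(2L/\sigma)$.

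For each contraction coefficient I invoke the identity $\eta_\gamma(K)=\sup_{w_1,w_2}E_\gamma(K(w_1)\|K(w_2))$ recorded in the excerpt, and restrict to $w_1,w_2\in\K$. The same Gaussian computation applies, now with pre-projection mean gap $\|(w_1-\eta\nabla\ell(w_1,x_t))-(w_2-\eta\nabla\ell(w_2,x_t))\|$. By the standard co-coercivity bound for $\rho$-strongly convex and $\beta$-smooth losses (valid under the implicit assumption $\eta\leq 2/(\beta+\rho)$), the gradient step $w\mapsto w-\eta\nabla\ell(w,x_t)$ is a contraction with factor $M$, so the gap is at most $M\|w_1-w_2\|\leq M D_\K$. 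Post-processing again preserves the bound, yielding $\eta_{e^\epsilon}(K_{x_t})\leq\theta_{e^\epsilon}(M D_\K/(\eta\sigma))$ and, after substitution into the SDPI telescope, the Gaussian result.

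The Laplace argument is entirely parallel. With $\K=[a,b]$ one-dimensional and $\eta Z\sim\L(0,\eta v)$, the explicit hockey-stick divergence between two Laplace distributions of common scale $\eta v$ at shift $\Delta$ is $(1-e^{\epsilon/2-\Delta/(2\eta v)})_+$; taking $\Delta=2\eta L$ for the mismatched step gives the first factor $(1-e^{\epsilon/2-L/v})_+$, and $\Delta=M(b-a)$ for each remaining step gives the contraction factor $(1-e^{\epsilon/2-M(b-a)/(2\eta v)})_+$. The principal obstacle throughout is establishing the contraction factor $M$: this rests on the co-coercivity inequality for smooth strongly convex functions, after which every remaining step is bookkeeping---the SDPI, the contraction-coefficient identity, and the explicit $E_\gamma$ formulas for shifted Gaussians and Laplacians are all already available in the excerpt.
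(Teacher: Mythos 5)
Your proposal is correct, but note that the paper itself never proves \Cref{thm:harvard}; it imports it verbatim from Asoodeh et al.\ (2020), and your reconstruction follows precisely the contraction-coefficient route of that cited work and of the machinery laid out in \Cref{sec_intro}: the strong data processing inequality applied from index $i$ onward, the identity $\eta_\gamma(K)=\sup_{w_1,w_2}E_\gamma(K(w_1)\,\|\,K(w_2))$, the closed-form hockey-stick divergences for equal-scale shifted Gaussians and Laplacians combined with post-processing and joint convexity, and the co-coercivity contraction factor $M$ (with the implicit requirement $\eta\le 2/(\beta+\rho)$ correctly flagged). I see no gaps in the argument.
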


To slightly simplify the notation, we can present the guarantees in \Cref{thm:harvard} as $\delta=A\cdot B^{n-i}$ where for the Gaussian case
\begin{equation}
\label{A_B_gaussian}
	A = \theta_{e^\epsilon}\l(\frac{2L}{\sigma}\r) \quad\text{and}\quad B = \theta_{e^\epsilon}\l(\frac{M D_{\mathbb{K}}}{\eta \sigma}\r)
\end{equation}
and for the Laplacian case
\begin{equation}
\label{A_B_laplace}
	A = \l(1 - e^{\frac{\epsilon}{2} - \frac Lv}\r)_+ \quad\text{and}\quad B = \l(1 - e^{\frac \epsilon2 - \frac{M(b-a)}{2\eta v}}\r)_+
\end{equation}

To get a bound that does not depend on the index of the entry on which the two datasets differ, the authors later consider the randomly-stopped PNSGD, which simply consist of picking a random stopping time for the PNSGD uniformly from $\{1, ..., n\}$.
The bound that they obtain for $\delta$ in the Gaussian case is $\delta = A/[n(1-B)]$. Based on their proof, it is clear that the actual bound contains a term $(1 - B^{n-i+1})$ at the numerator and that the same result can be obtained if we consider the Laplace noise.

In Section \ref{sec_shuffling} we prove that a better bound than the one obtained via randomly-stopped PNSGD can be obtained by first shuffling the dataset and then applying the simple PNSGD. In Section \ref{sec_asymptotic} we study the asymptotic behavior of such bound and find the appropriate decay rate for the variability of the noise level that guarantees convergence for $\delta$ to a non-zero constant.

\section{Related Work}
\label{sec_related_work}

In the DP regime, $(\epsilon,\delta)$-DP (see \eqref{definition_DP}) is arguably the most popular definition, which is oftentimes achieved by an algorithm which contains Gaussian or Laplacian noises. For example, in NoisySGD and NoisyAdam in \citet{abadi2016deep,bu2020deep}, and PNSGD in this paper, a certain level of random noise is injected into the gradient to achieve DP. Notably, as we use more datapoints (or more iterations during the optimization) during the training procedure, the privacy loss accumulates at a rate that depends on the magnitude of the noise.

It is remarkably important to charaterize, as tightly as possible, the privacy loss at each iteration. An increasing line of works have proposed to address this difficulty \citep{dong2019gaussian,bun2016concentrated,dwork2016concentrated,balle2018privacy,mironov2017renyi,wang2019subsampled,koskela2020computing,asoodeh2020privacy,abadi2016deep}, which bring up many useful notions of DP, such as R\'enyi DP, Gaussian DP, $f$-DP and so on. Our paper extends \citet{asoodeh2020privacy} by shuffling the dataset first rather than randomly stopping the PNSGD (see Theorem 5 in \citet{asoodeh2020privacy}), in order to address the non-uniformity of privacy guarantee. As a consequence, we obtain a strictly better privacy bound and better loss than \citet{asoodeh2020privacy}, \citet{abadi2016deep}, and an additional online result of the privacy guarantee.

Furthermore, our results can be easily combined with composition tools in DP \citep{kairouz2015composition, abadi2016deep, koskela2020computing, dong2019gaussian}. In \Cref{thm_shuffled_pnsgd}, \Cref{thm_shuffled_fixed_laplace} and \Cref{thm_shuffled_fixed_gaussian}, the $(\epsilon,\delta)$ is computed based on a single pass of the entire dataset, or equivalently on one epoch. When using the shuffled PNSGD for multiple epochs, as is usual for modern machine learning, the privacy loss accumulates and is accountable by Moments accountant (using Renyi DP \citep{mironov2017renyi}), $f$-DP (using functional characterization of the type I/II errors trade-off) and other divergence approaches.

\section{Shuffled PNSGD}
\label{sec_shuffling}

In this section, we prove the bound on $\delta$ that we can obtain by first shuffling the dataset and then apply the PNSGD algorithm. The simple underlying idea here is that, when shuffling the dataset, the index at which the two neighboring datasets differ has equal probability to end up in each position. This is a key difference compared to the randomly-stopped PNSGD, and allows us to get a better bound that do not depend on the initial position of that index.

\begin{theorem}
\label{thm_shuffled_pnsgd}
Let $D \sim D'$ be of size $n$. Then the shuffled PNSGD is $(\epsilon, \delta)$-DP with 
\begin{equation}
\label{delta_shuffled_PNSGD}
\delta = \frac{A\cdot (1 - B^n)}{n(1-B)} 
\end{equation}
and the constants $A$ and $B$ are defined in (\ref{A_B_gaussian}) for Gaussian noise and (\ref{A_B_laplace}) for Laplace noise.
\end{theorem}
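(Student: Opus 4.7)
The plan is to exploit the observation that, after a uniformly random shuffle, the index where $D$ and $D'$ disagree is uniformly distributed over $\{1,\ldots,n\}$, and then to average the per-index guarantee of \Cref{thm:harvard} against this uniform distribution. The proof then reduces to a mixture bound for the hockey-stick divergence followed by a finite geometric sum.

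First I would decompose the shuffled PNSGD mechanism as a uniform mixture over permutations. Let $\P_D$ denote the output distribution on $D$, and let $\P_{D,\pi}$ denote the PNSGD output on the reordered sequence $(x_{\pi(1)},\ldots,x_{\pi(n)})$, so that $\P_D = \frac{1}{n!}\sum_{\pi\in S_n}\P_{D,\pi}$. Let $k$ be the unique index where $D$ and $D'$ disagree. I would partition $S_n$ by the position $i=\pi^{-1}(k)$ occupied by the differing entry; for each $i\in\{1,\ldots,n\}$ there are exactly $(n-1)!$ such permutations. The key bookkeeping step is to pair each $\pi$ used on $D$ with the \emph{same} $\pi$ used on $D'$: the resulting ordered sequences are then neighboring in the sense of \Cref{thm:harvard}, differing at position $i$ only, while all other positions are filled identically.

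Second, \Cref{thm:harvard} yields, for every $\pi$ with $\pi^{-1}(k)=i$,
\[
E_{e^\epsilon}\l(\P_{D,\pi}\,\|\,\P_{D',\pi}\r)\leq A\cdot B^{n-i},
\]
with $A,B$ specified by (\ref{A_B_gaussian}) or (\ref{A_B_laplace}). Invoking the joint convexity of the hockey-stick divergence (a standard property of any $f$-divergence) across the uniform mixture, grouping permutations by their value of $i$, gives
\[
E_{e^\epsilon}\l(\P_D\,\|\,\P_{D'}\r)\leq \frac{1}{n!}\sum_{\pi\in S_n}E_{e^\epsilon}\l(\P_{D,\pi}\,\|\,\P_{D',\pi}\r)\leq\frac{1}{n}\sum_{i=1}^{n}A\cdot B^{n-i}=\frac{A(1-B^n)}{n(1-B)},
\]
using the elementary identity $\sum_{j=0}^{n-1}B^j=(1-B^n)/(1-B)$. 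The conclusion then follows from the equivalence (\ref{connection_E_divergence_privacy}) between $(\epsilon,\delta)$-DP and a bound on $E_{e^\epsilon}$.

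The main obstacle I anticipate is not analytic but structural: one must argue carefully that the natural pairing between permutations of $D$ and $D'$ preserves the single-index-disagreement condition required to invoke \Cref{thm:harvard}, and that the hockey-stick divergence convexity can be applied to this mixture in one step rather than component-by-component with a union bound. Once the pairing and the convexity step are pinned down, the rest is a routine geometric series and an appeal to the characterization in (\ref{connection_E_divergence_privacy}).
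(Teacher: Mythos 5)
Your proposal is correct and follows essentially the same route as the paper's proof: decompose the shuffled mechanism as a uniform mixture over permutations applied identically to $D$ and $D'$, apply convexity of the hockey-stick divergence, invoke the per-position bound $A\cdot B^{n-i}$ from \Cref{thm:harvard}, and sum the geometric series. The paper merely illustrates the argument first with $n=2$ before generalizing, so there is nothing substantive to add.
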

\begin{proof}
Let's start by considering the simple case $n = 2$, so that $D = \{x_1, x_2\}$ and $D' = \{x_1', x_2'\}$ and let $i \in \{1, 2\}$ be the index at which they differ. Let $\mu$ be the output distribution of the shuffled PNSGD on $D$, and $\nu$ be the corresponding distribution from $D'$. If we define $S(D)$ and $S(D')$ to be the two dataset after performing the same shuffling, then we can only have either $S(D) = \{x_1, x_2\}$ or $S(D) = \{x_2, x_1\}$, both with equal probability $1/2$. The outcomes of the shuffled PNSGD on $D$ and $D'$ are then
\begin{align*}
\mu &= \frac12 \mu_0 K_{x_1} K_{x_2} + \frac12 \mu_0 K_{x_2} K_{x_1} \\[3pt]
\nu &= \frac12 \mu_0 K_{x_1'} K_{x_2'} + \frac12 \mu_0 K_{x_2'} K_{x_1'}
\end{align*}
By convexity and Jensen's inequality we have that
\begin{align*}
E_\gamma(\mu \| v) &\leq \frac12 E_\gamma \l(\mu_0 K_{x_1} K_{x_2} \| \mu_0 K_{x_1'} K_{x_2'}\r) \\
&\quad + \frac12 E_\gamma \l(\mu_0 K_{x_2} K_{x_1} \| \mu_0 K_{x_2'} K_{x_1'}\r)
\end{align*}
and now we have two options, based on where the two original datasets differ. If $i = 1$, in the first term the privacy is stronger than in the second one (because $x_1$ is seen earlier), and we have 
\begin{equation*}
E_\gamma(\mu \| \nu) \leq \frac12 A\cdot B + \frac12 A = \frac12 A (B + 1)
\end{equation*}
If $i = 2$, now the privacy is stronger in the second term, and
\begin{equation*}
E_\gamma(\mu \| \nu) \leq \frac12 A + \frac12 A\cdot B = \frac12 A (B + 1)
\end{equation*}
Since in both cases the bound is the same, this means that for any $i \in \{1, 2\}$ the privacy guarantee of the shuffled PNSGD algorithm is equal to $A (B + 1)/2$.
From here we see that, when $n > 2$, the situation is similar. Instead of just two, we have $n!$ possible permutations for the elements of $D$, each one happening with the same probability $1/n!$. 
For each fixed index $i$ on which the two neighboring datasets differ, we have $(n-1)!$ permutations in which element $x_i$ appears in each of the $n$ positions. When, after the permutation, element $x_i$ ends up in last position, the bound on $E_\gamma(\mu \| \nu)$ is the weakest and just equals $A$. When in ends up in first position, the bound is the strongest and is equal to $A\cdot B^{n-1}$.
We then have that, irrespectively of the index $i$,
\begin{equation*}
E_\gamma(\mu \| v) \leq \frac{1}{n!} (n-1)! A \sum_{j=0}^{n-1} B^j = \frac{A\cdot (1 - B^n)}{n(1-B)}
\end{equation*}
\end{proof}
This bound is indeed better than the one found in \citet{asoodeh2020privacy} for the randomly stopped PNSGD since it contains an extra term $(1 - B^n)$ at the numerator which does not depend on $i$ and is smaller than $1$. If $n$ is large and $B$ is fixed, this difference is negligible because it decays exponentially. However, we will see later that when the injected noise is reduced at the appropriate rate we can guarantee that $B \approx 1 - O(1/n)$, so that the extra term ends up having an impact in the final bound. 
It is also important to notice that shuffled PNSGD achieves in general better performance than randomly stopped PNSGD and it is much more commonly used in practice. We see in \Cref{fig_shuffle_vs_early_stop} that this is the case for both linear and logistic regression, and that the variation in the result in shuffled PNSGD is less than for the early stopped case, due to the fact that we always use all the data available for each epoch.
In the next section we look at the asymptotic behavior of (\ref{delta_shuffled_PNSGD}) when $n$ grows and the variance of the injected noise is properly reduced to guarantee convergence.

\begin{figure}[!thbp]
  \centering
  \begin{minipage}{0.45\linewidth}
  \centering
  \includegraphics[width=\linewidth]{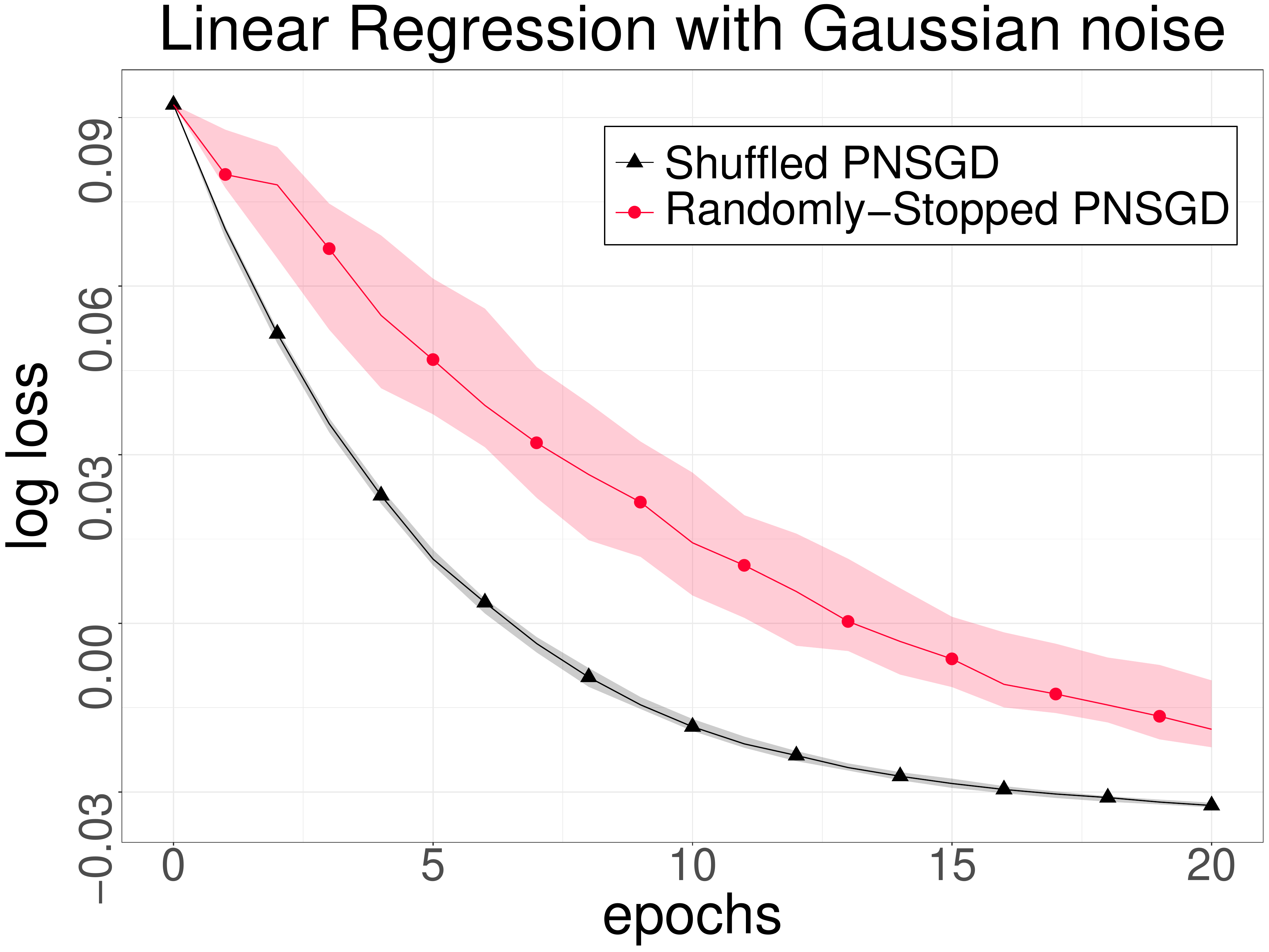}
  \end{minipage}
  \begin{minipage}{0.45\linewidth}
  \centering
  \includegraphics[width=\linewidth]{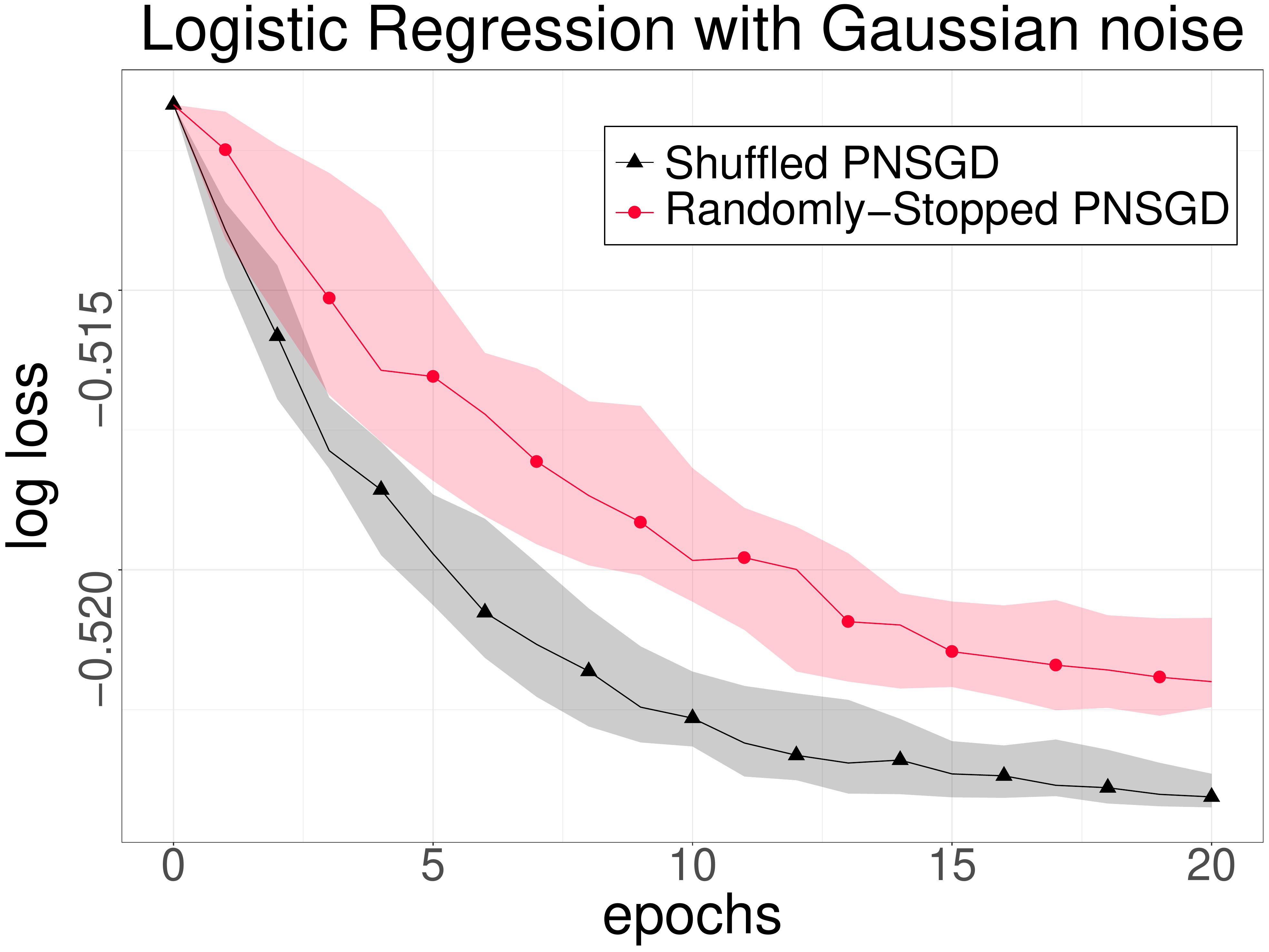}
  \end{minipage}  
  \caption{Comparison between shuffled PNSGD and randomly-stopped PNSGD with Gaussian noise in linear and logistic regression. On the y-axis we report the log loss achieved. The parameters used are $n = 1000, d=2, \sigma=0.5, \theta^* = \Pi_{\K}(1,2)$ and $\K$ is a ball of radius $1$. The learning rate is $10^{-4}$ in linear regression and $5\cdot 10^{-3}$ in logistic regression.}
\label{fig_shuffle_vs_early_stop}
\end{figure}


\section{Asymptotic Analysis for $\delta$ when Using Shuffling and Fixed Noises}
\label{sec_asymptotic}

In this Section we investigate the behavior of the differential privacy bound in (\ref{delta_shuffled_PNSGD}) when the size $n$ of the dataset grows. In Section \ref{subsec_asymptotic_laplace} we prove a results for the shuffled PNSGD with fixed Laplace noise, while in Section \ref{subsec_asymptotic_gaussian} we prove the same result on the shuffled PNSGD with fixed Gaussian noise.

\subsection{Laplace Noise}
\label{subsec_asymptotic_laplace}

We present first a result that holds when we consider a fixed Laplace noise $\L(0, v)$ injected into the PNSGD algorithm for each update. In order to get a convergence result for $\delta$ as the size $n$ of the dataset grows, the level of noise that we use should be targeted to the quantity $n$.  The decay of $v$ is regulated by two parameters, $C_1$ and $C_2$. While $C_1$ is set to be large, so that $\delta$ converges to a small value, the use of $C_2$ is simply to allow the noise level not to be too large for small $n$, but does not appear in the asymptotic bound.

\begin{figure}[tb]
  \centering
  \begin{minipage}{0.49\linewidth}
  \centering
  \includegraphics[width=\linewidth]{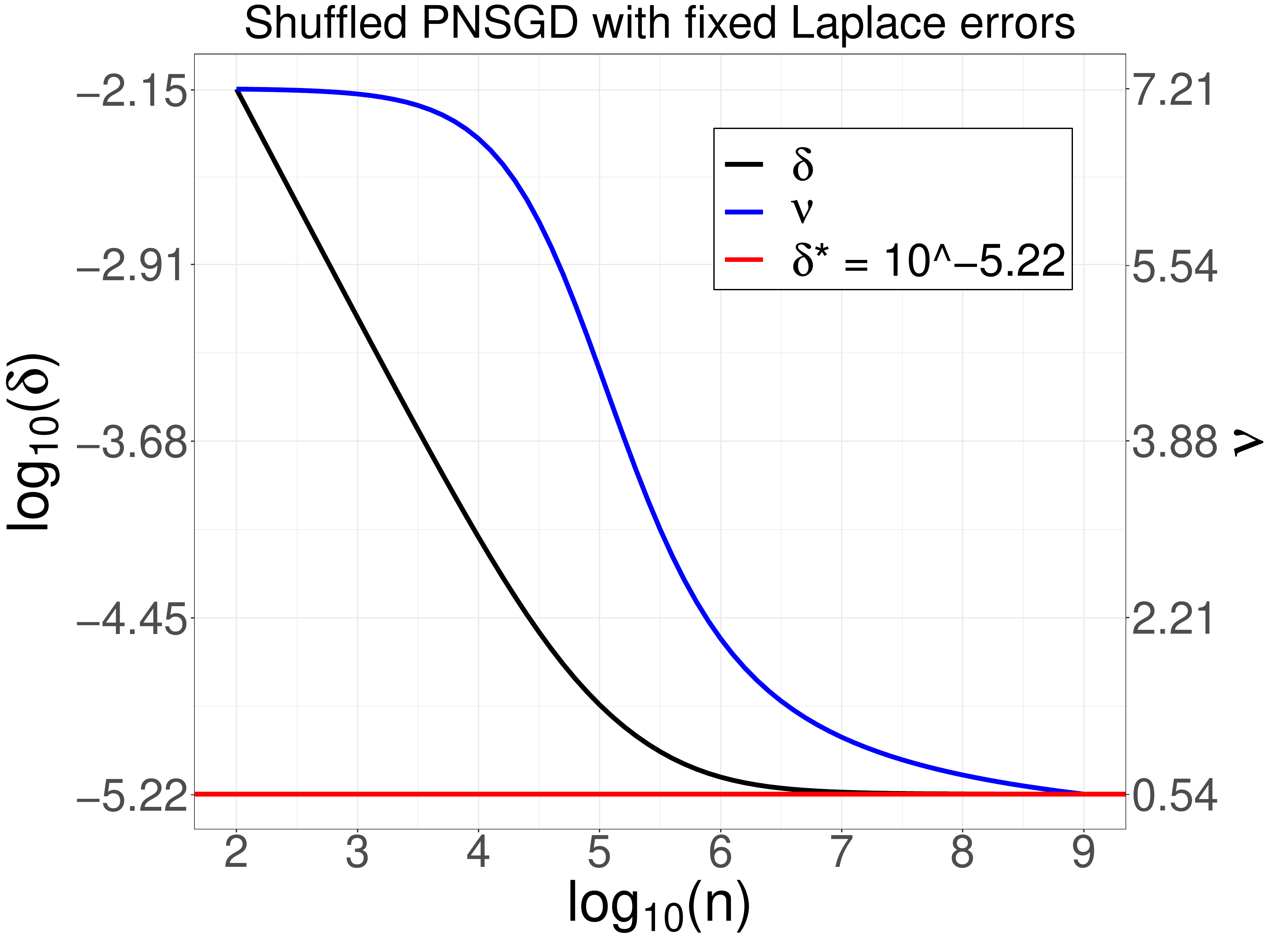}
  \end{minipage}
  \begin{minipage}{0.49\linewidth}
  \centering
  \includegraphics[width=\linewidth]{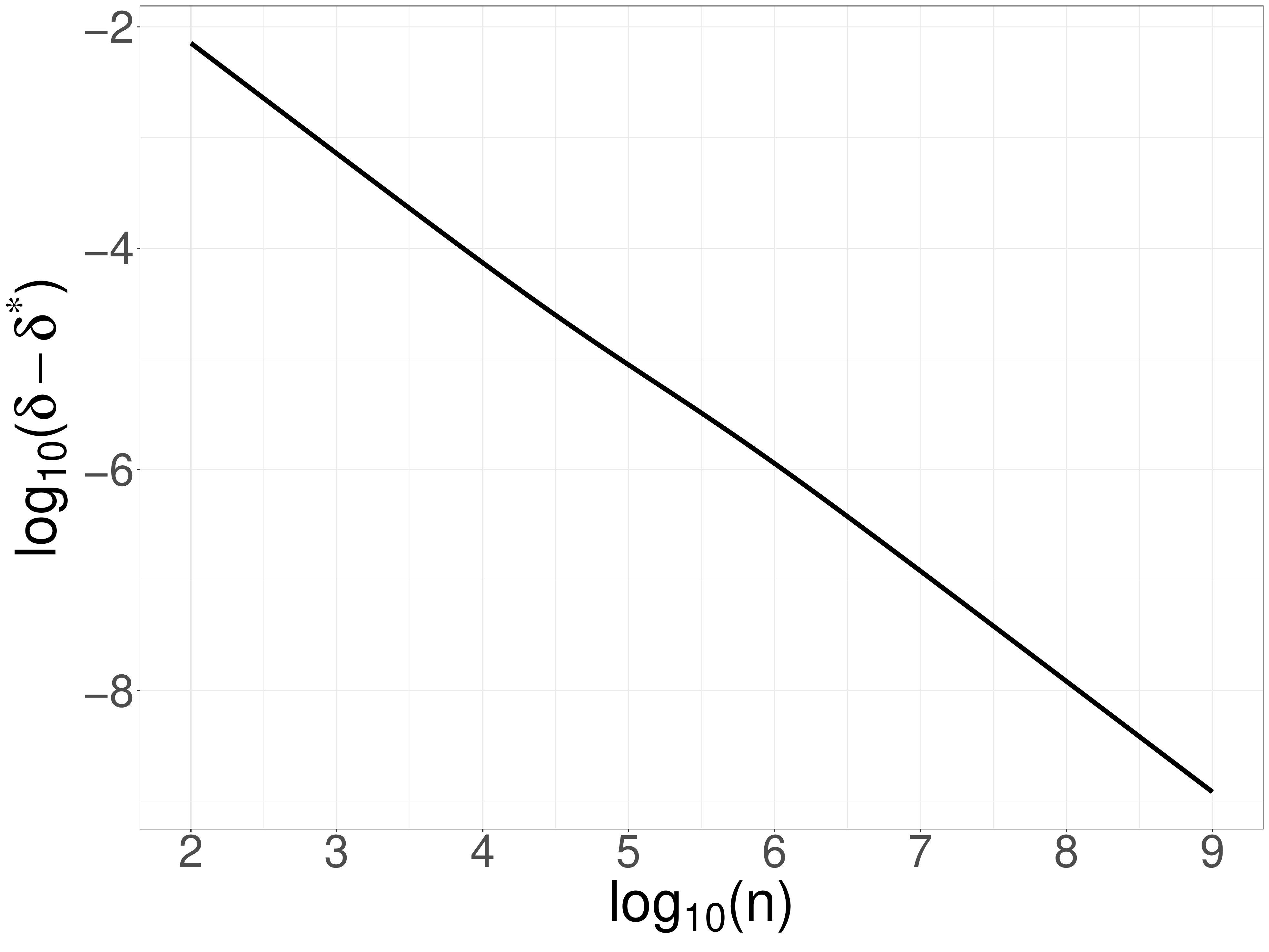}
  \end{minipage}  
  \caption{(left) Convergence of $\delta$ to $\delta^*$ in (\ref{delta*_laplace_shuffled}). We plot in black the behavior of $\delta$ as a function of $n$, and in blue the corresponding behavior of $v(n)$ in (\ref{v_laplace_shuffled}). (right) We show that the convergence rate is $1/n$. The parameters used are $L=10, \beta=0.5, \rho=0, \eta = 0.1, \epsilon=1, (a,b)=(0,1), C_1=10^5$ and $C_2=2$.}
\label{fig_delta_shuffled_fixed_laplace}
\end{figure}


\begin{theorem}
\label{thm_shuffled_fixed_laplace}
Consider the shuffled PNSGD with Laplace noise $\L(0, v(n))$ which is fixed for each update, where 
\begin{equation}
\label{v_laplace_shuffled}
v(n) = \frac{M(b-a)}{2\eta \log\l(n/C_1 + C_2\r)} .
\end{equation}
Then, for $n$ sufficiently large the procedure is $(\epsilon, \delta)$-DP with $\delta = \delta^* + O(1/n)$ and
\begin{equation}
\label{delta*_laplace_shuffled}
\delta^* = \frac{1 - e^{-C_1\exp(\epsilon/2)}}{C_1e^{\frac{\epsilon}{2}}}
\end{equation}
\end{theorem}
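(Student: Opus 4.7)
The plan is to apply \Cref{thm_shuffled_pnsgd} in its Laplace instantiation, giving $\delta = A(1-B^n)/[n(1-B)]$ with $A,B$ as in (\ref{A_B_laplace}), and then asymptotically expand each of the three factors $A$, $1-B^n$ and $n(1-B)$ in $n$. The choice of $v(n)$ in (\ref{v_laplace_shuffled}) is tuned exactly so that the exponent appearing in $B$ collapses to a logarithm of $n$, after which everything becomes elementary algebra plus a Taylor expansion of $\log(1-x)$.

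The first step is to substitute $v(n)$ into $B$: since $M(b-a)/(2\eta v(n)) = \log(n/C_1 + C_2)$, one gets
\begin{equation*}
B = 1 - e^{\epsilon/2 - \log(n/C_1 + C_2)} = 1 - \frac{C_1 e^{\epsilon/2}}{n + C_1 C_2},
\end{equation*}
which lies strictly in $(0,1)$ for all sufficiently large $n$, so the $(\cdot)_+$ is inactive. Immediately $n(1-B) = nC_1 e^{\epsilon/2}/(n+C_1 C_2) = C_1 e^{\epsilon/2} + O(1/n)$. For $B^n$, setting $x_n := C_1 e^{\epsilon/2}/(n+C_1 C_2) = \Theta(1/n)$ and using $\log(1-x_n) = -x_n + O(x_n^2)$, I would compute $n\log B = -nx_n + O(nx_n^2) = -C_1 e^{\epsilon/2} + O(1/n)$, and then exponentiate to obtain $B^n = e^{-C_1 e^{\epsilon/2}}(1+O(1/n))$, so that $1 - B^n = 1 - e^{-C_1 e^{\epsilon/2}} + O(1/n)$.

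For the remaining factor, the same substitution gives $L/v(n) = \alpha \log(n/C_1+C_2)$ with $\alpha := 2\eta L/(M(b-a))$, hence $A = 1 - e^{\epsilon/2}(n/C_1+C_2)^{-\alpha}$, which tends to $1$ at a polynomial rate in $n$; under the parameter regime of interest (i.e.\ $\alpha \geq 1$, as occurs in the paper's example) this rate is at least $1/n$, so $A = 1 + O(1/n)$. Multiplying the three expansions together yields
\begin{equation*}
\delta = \frac{(1 + O(1/n))(1 - e^{-C_1 e^{\epsilon/2}} + O(1/n))}{C_1 e^{\epsilon/2} + O(1/n)} = \frac{1 - e^{-C_1 e^{\epsilon/2}}}{C_1 e^{\epsilon/2}} + O(1/n) = \delta^* + O(1/n).
\end{equation*}

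The main obstacle is the bookkeeping on the $O(1/n)$ remainder in the $B^n$ factor: the log-expansion step requires $nx_n^2 \to 0$ quantitatively, which follows from $x_n = \Theta(1/n)$, and one then has to verify that re-exponentiation converts an additive $O(1/n)$ inside the exponent into a multiplicative $(1 + O(1/n))$ factor on $B^n$, rather than something weaker. Everything else is routine algebra combining the numerator and denominator expansions.
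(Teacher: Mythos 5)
Your proposal is correct and follows essentially the same route as the paper's proof: plug $v(n)$ into (\ref{A_B_laplace}) so that $B = 1 - C_1 e^{\epsilon/2}/(n + C_1 C_2)$, note $n(1-B) = C_1 e^{\epsilon/2} + O(1/n)$, expand $B^n = e^{-C_1 e^{\epsilon/2}}(1+O(1/n))$, and absorb the factor $A$ into $1 + O(1/n)$. Your explicit caveat that $A = 1 - e^{\epsilon/2}(n/C_1 + C_2)^{-2L\eta/(M(b-a))}$ is $1 + O(1/n)$ only when $2L\eta/(M(b-a)) \geq 1$ is in fact more careful than the paper, which silently asserts this $(1 + O(1/n))$ factor; without that condition the remainder would only be $O\big(n^{-2L\eta/(M(b-a))}\big)$.
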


\begin{proof}
We use the result in \Cref{thm_shuffled_pnsgd} combined with (\ref{A_B_laplace}), and get that
\begin{equation*}
\delta = \frac{\l(1 - e^{\frac{\epsilon}{2} - \frac{L}{v(n)}}\r)_+ \cdot \l[1 - \l(1 - e^{\frac \epsilon2 - \frac{M(b-a)}{2\eta v(n)}}\r)_+^n\r]}{n\cdot e^{\frac \epsilon2 - \frac{M(b-a)}{2\eta v(n)}}}
\end{equation*}
Once we plug in the $v(n)$ defined in (\ref{v_laplace_shuffled}) we have that, when $n$ is sufficiently large,
\begin{align*}
\delta &= \frac{\Big(1 - e^{\frac{\epsilon}{2} - \frac{2L\eta \log(\frac{n}{C_1} + C_2)}{M(b-a)}}\Big)_+ \Big[1 - \Big(1 - \frac{C_1e^{\frac{\epsilon}{2}}}{n+C_1C_2}\Big)_+^{n}\Big]}{n\cdot e^{\frac \epsilon2 - \log(\frac{n}{C_1} + C_2)}} \\[3pt]
&= \frac{\l[1 - \l(1 - \frac{C_1e^{\frac{\epsilon}{2}}}{n+C_1C_2}\r)^{n}\r]}{n\cdot \frac{C_1e^{\frac{\epsilon}{2}}}{n+C_1C_2}}\cdot\left(1 + O\l(\frac{1}{n}\r)\right) \\[3pt]
&= \frac{1 - e^{-C_1\exp(\epsilon/2)}}{C_1e^{\frac{\epsilon}{2}}} + O\l(\frac{1}{n}\r)
\end{align*}

\end{proof}

The convergence result in \Cref{thm_shuffled_fixed_laplace} is confirmed by Figure \ref{fig_delta_shuffled_fixed_laplace}. In the left plot we see that $\delta$ converges to the $\delta^*$ defined in (\ref{delta*_laplace_shuffled}), while in the right plot we observe that the convergence rate is indeed $1/n$.

\subsection{Gaussian Noise}
\label{subsec_asymptotic_gaussian}

\begin{figure*}[!htbp]
  \centering
  \begin{minipage}{0.45\linewidth}
  \centering
  \includegraphics[width=\linewidth]{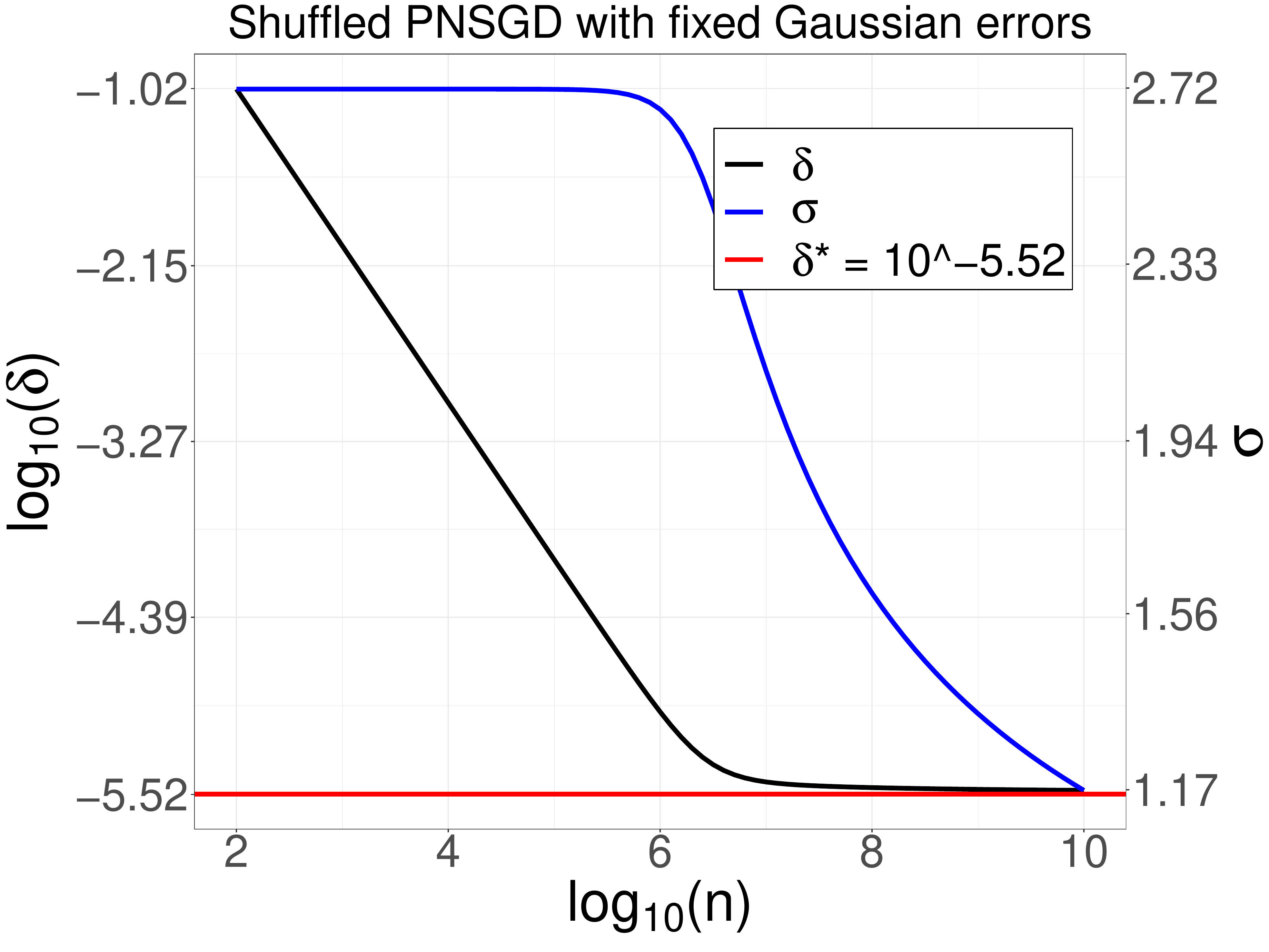}
  \vspace{2mm}
  \end{minipage}%
  \begin{minipage}{0.45\linewidth}
  \centering
  \includegraphics[width=\linewidth]{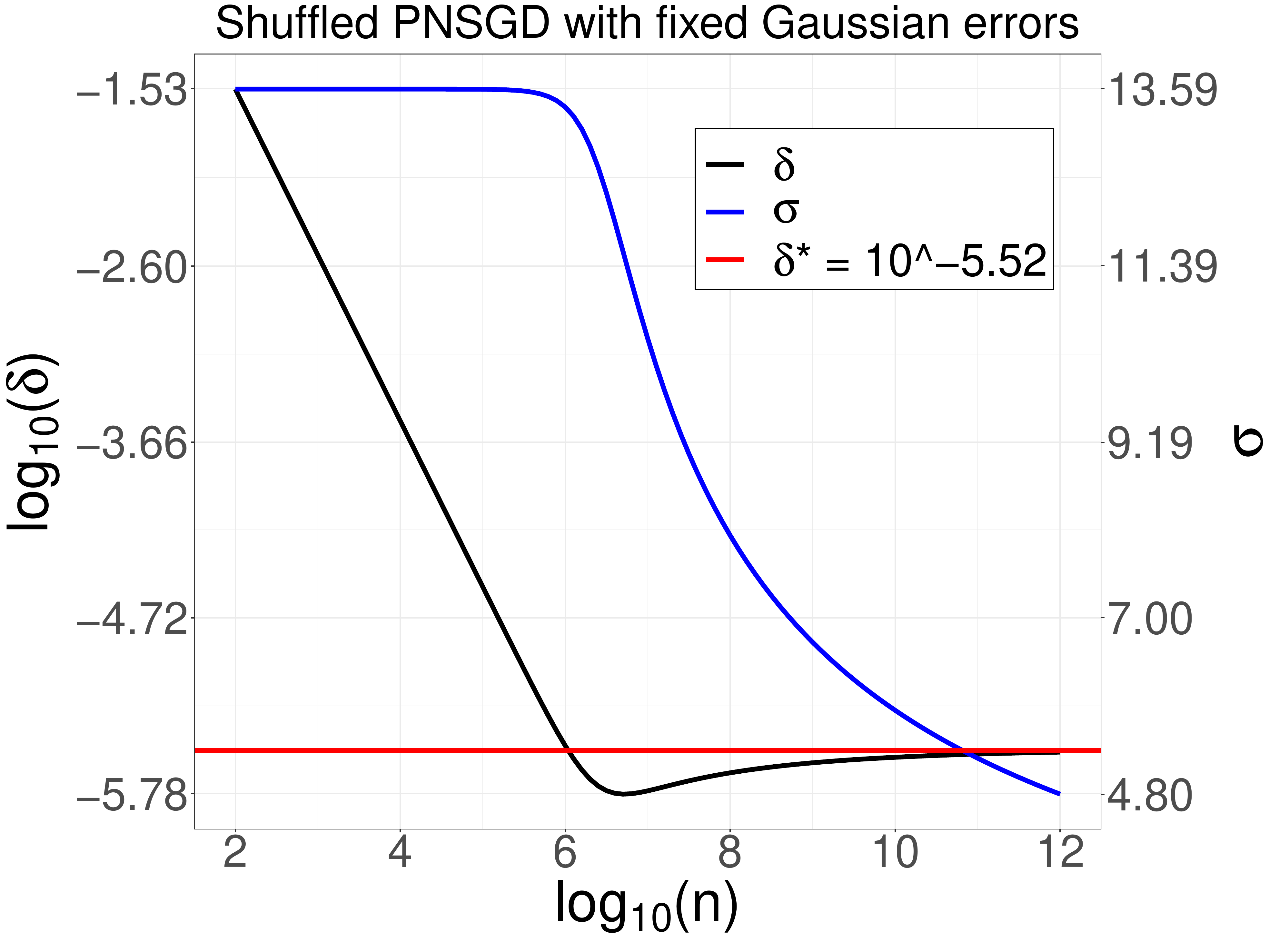}
  \vspace{2mm}
  \end{minipage}%
  \hfill
  \begin{minipage}{0.45\linewidth}
  \centering
  \includegraphics[width=\linewidth]{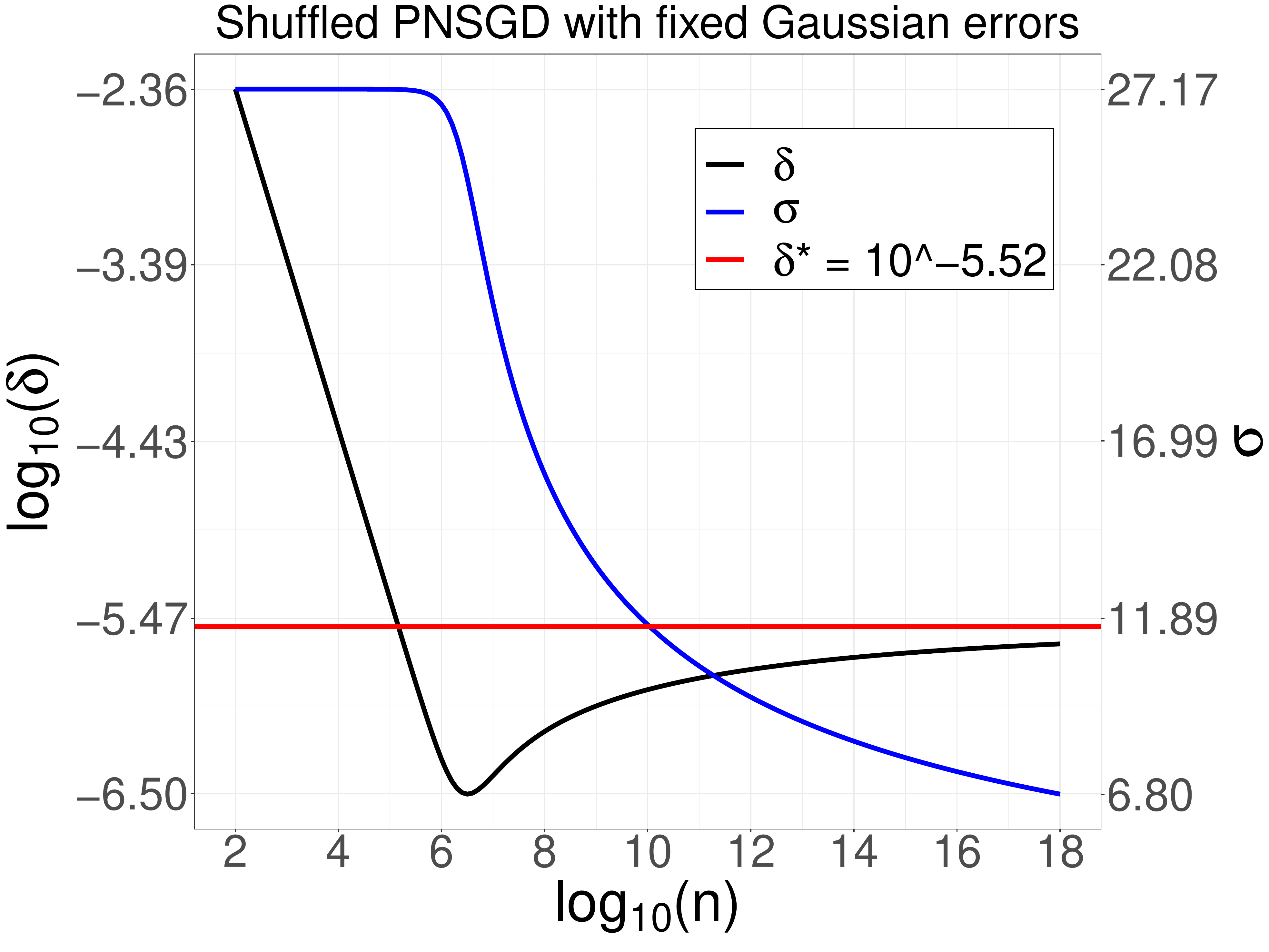}
  \end{minipage}%
  \begin{minipage}{0.45\linewidth}
   \centering
  \includegraphics[width=\linewidth]{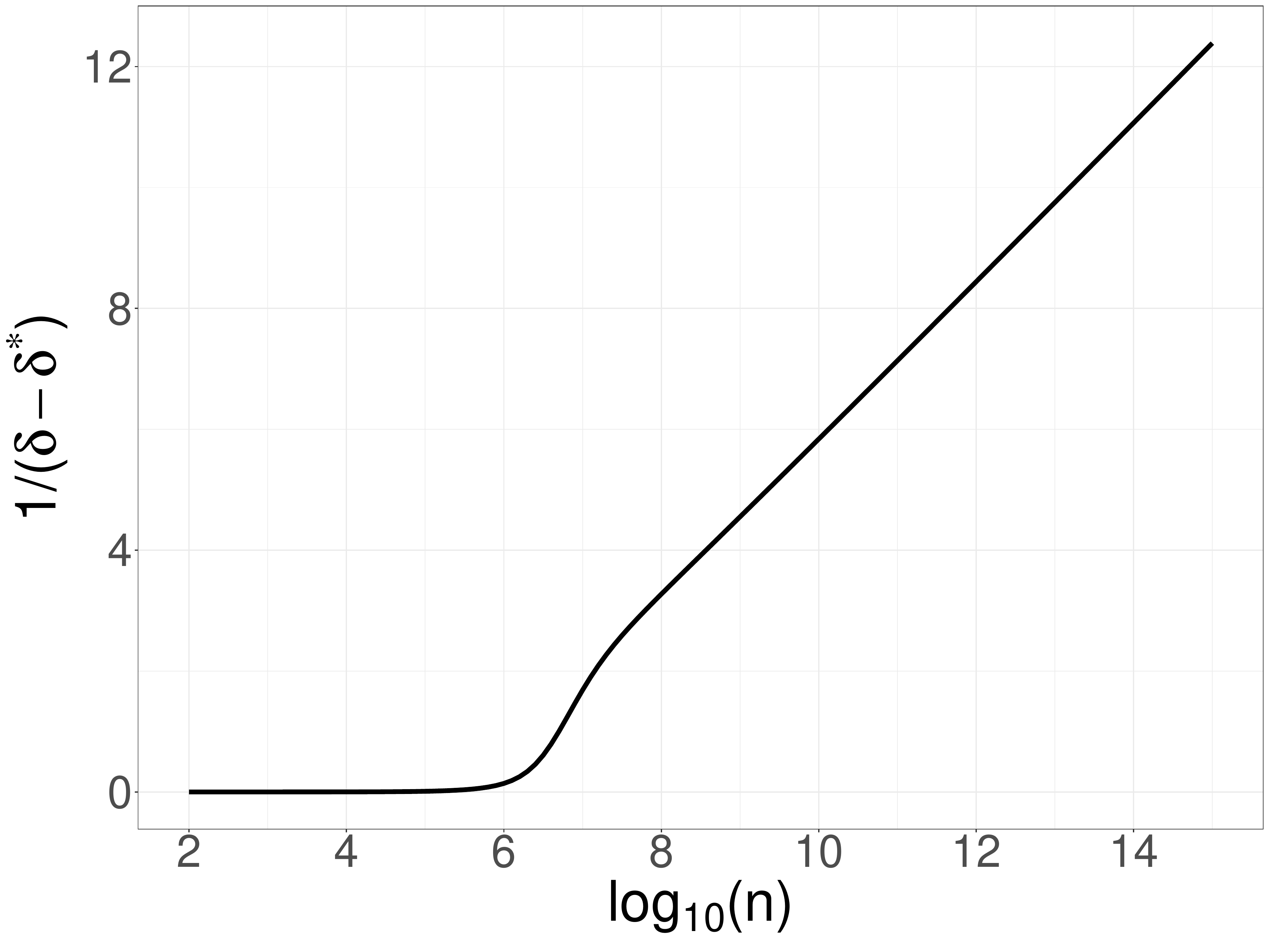}
  \end{minipage}
  \caption{Convergence of $\delta$ to $\delta^*$ defined in (\ref{delta_gaussian_shuffled}). We report in black the behavior of $\delta$ and in blue that of $\sigma(n)$ in (\ref{sigma_gaussian_shuffled}). We consider $\eta \in \{0.1, 0.02, 0.01\}$ and the other parameters are $L=10, \beta=0.5, \rho=0, \epsilon=1, D_{\K}=1, C_1=10^5$ and $C_2=100$. In the right panel we show that the convergence rate is $1/\log(n)$.}
\label{fig_delta_shuffled_fixed_gaussian}
\end{figure*}

Similarly to what we just proved in \Cref{subsec_asymptotic_laplace} we now discuss a result for the shuffled PNSGD with Gaussian noise $N(0, \sigma^2(n))$.

\begin{theorem}
\label{thm_shuffled_fixed_gaussian}
Consider the shuffled PNSGD algorithm with Gaussian noise $N(0, \sigma^2(n))$ which is fixed for each update, where
\begin{equation}
\label{sigma_gaussian_shuffled}
\sigma(n) = \frac{M D_\mathbb{K}}{2\eta \sqrt{W\l(\frac{n^2}{2 C_1^2 \pi} + C_2\r)}}
\end{equation}
and $W$ is the Lambert W function. Then, for $n$ sufficiently large, the procedure is $(\epsilon, \delta)$-DP with $\delta = \delta^* + O\l(\frac{1}{\log(n)}\r)$ and
\begin{equation}
\label{delta_gaussian_shuffled}
\delta^* = \frac{1 - e^{-2C_1e^{\frac{\epsilon}{2}}}}{2C_1e^{\frac{\epsilon}{2}}}
\end{equation}
\end{theorem}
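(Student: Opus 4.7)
The plan is to insert the Gaussian expressions for $A$ and $B$ from (\ref{A_B_gaussian}) into the shuffled bound $\delta=A(1-B^n)/[n(1-B)]$ of \Cref{thm_shuffled_pnsgd}, convert the $\theta_{e^\epsilon}$ terms into elementary functions via the Mills-ratio asymptotic for $Q$, and then verify that the particular $\sigma(n)$ in (\ref{sigma_gaussian_shuffled}) is the unique choice that makes $n(1-B)$ converge to the constant $2C_1e^{\epsilon/2}$ appearing in $\delta^*$. The Lambert $W$ function enters precisely as the inverse of the implicit equation obtained by imposing this convergence.

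First I would set $\gamma=e^\epsilon$ and $r(n)=MD_\mathbb{K}/(\eta\sigma(n))=2\sqrt{W(n^2/(2\pi C_1^2)+C_2)}$, so that $r(n)\to\infty$. From the definition (\ref{definition_function_theta}) and the identity $1-Q(x)=Q(-x)$,
\[
1-B=Q\!\left(\tfrac{r}{2}-\tfrac{\log\gamma}{r}\right)+\gamma\, Q\!\left(\tfrac{r}{2}+\tfrac{\log\gamma}{r}\right),
\]
and both arguments tend to $+\infty$. Using $Q(t)=\phi(t)/t\,(1+O(1/t^2))$ and expanding $(r/2\pm\log\gamma/r)^2=r^2/4\pm\log\gamma+(\log\gamma)^2/r^2$, the factors $\gamma^{\mp1/2}$ combine so that
\[
1-B=\frac{4e^{\epsilon/2}}{r\sqrt{2\pi}}\,e^{-r^2/8}\bigl(1+O(1/r^2)\bigr).
\]
The defining identity $W(x)e^{W(x)}=x$ was engineered precisely so that, with $u=r^2/4=W(n^2/(2\pi C_1^2)+C_2)$, one has $e^{-r^2/8}=\sqrt{u/(n^2/(2\pi C_1^2)+C_2)}$. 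The $\sqrt{u}$ cancels against the $r\sqrt{2\pi}$ in the denominator and the elementary estimate $\sqrt{n^2/(2\pi C_1^2)+C_2}=n/(C_1\sqrt{2\pi})\cdot(1+O(1/n^2))$ yields
\[
n(1-B)=2C_1\,e^{\epsilon/2}\bigl(1+O(1/\log n)\bigr),
\]
where the $O(1/\log n)$ remainder comes from the Mills-ratio term $O(1/r^2)$ together with the standard asymptotic $W(n^2/(2\pi C_1^2))\sim 2\log n$.

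From here the conclusion is a short computation. The factor $A=\theta_\gamma(2L/\sigma(n))$ tends to $1$ faster than any negative power of $n$, since $2L/\sigma(n)$ grows like $\sqrt{\log n}$ and an identical Mills-ratio expansion gives $1-A=O(e^{-c\log n})$ for some $c>0$. Plugging the expansion of $1-B$ into $\log B^n = -n(1-B)-n(1-B)^2/2-\cdots$, the quadratic and higher terms contribute only $O(1/n)$, so $B^n=e^{-2C_1e^{\epsilon/2}}(1+O(1/\log n))$, and combining these pieces gives
\[
\delta=\frac{A(1-B^n)}{n(1-B)}=\frac{1-e^{-2C_1e^{\epsilon/2}}}{2C_1e^{\epsilon/2}}+O\!\left(\frac{1}{\log n}\right),
\]
which matches (\ref{delta_gaussian_shuffled}).

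The hard part will be the propagation of error terms that produces the slow $1/\log n$ rate. Unlike the Laplace case, where $1-B$ admits a clean exponential closed form yielding a relative error $O(1/n)$, here the Mills-ratio remainder already costs a relative $O(1/r^2)=O(1/\log n)$, and this error propagates almost verbatim through $B^n$ and then into $\delta$. A careful bookkeeping is also needed to show that the auxiliary constant $C_2$, whose only role is to keep the argument of $W$ in the range where the principal branch is real-valued for small $n$, contributes merely a subdominant $O(1/n^2)$ correction inside $W$ and therefore does not appear in either $\delta^*$ or the final rate.
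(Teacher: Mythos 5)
Your proof is correct and takes essentially the same route as the paper's: both rely on the Mills-ratio asymptotic for $Q$ (which the paper isolates as Lemma~\ref{lem:simplify}) together with the defining identity $W(x)e^{W(x)}=x$ to show that $n(1-B)=2C_1e^{\epsilon/2}\bigl(1+O(1/\log n)\bigr)$, after which the expansion of $B^n$ and the control of $A$ are routine. Your inline use of $1-Q(x)=Q(-x)$ and the sharper polynomial estimate $1-A=O(n^{-c})$ are minor stylistic refinements over the paper's treatment, which instead expands $Q$ separately at large negative argument and settles for the weaker but still sufficient bound $1-A=o\bigl((\log n)^{-3/2}\bigr)$.
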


Just like $v(n)$, the decay of the standard deviation $\sigma(n)$ is regulated by the parameters $C_1$ and $C_2$. The difference here is that, instead of a simple logarithmic decay, we now have a decay rate that depends on the Lambert W function, which is slightly harder to study analytically than the logarithm. Even though the Lambert W function is fundamentally equivalent to a logarithm when its argument grows, the difference with the Laplace case is also evident in the fact that the convergence of $\delta$ to $\delta^*$ happens more slowly, at a rate of $1/\log(n)$. 
The proof of the theorem is in \Cref{appendix_proof_gaussian_fixed}, and makes use of the following Lemma, which is proved in \Cref{appendix_proof_lemma}.

\begin{lemma}
\label{lem:simplify}
For $\theta_\gamma(r)$ defined in (\ref{definition_function_theta}), a sufficiently small $\sigma$ and two constants $c$ and $\epsilon$, we have
\begin{equation*}
\theta_{e^\epsilon}\l(\frac{c}{\sigma}\r)=1-\frac{1}{\sqrt{2\pi}}e^{\frac{\epsilon}{2}}e^{-\frac{c^2}{8\sigma^2}}\l(\frac{4\sigma}{c}+O(\sigma^3)\r).
\end{equation*}
\end{lemma}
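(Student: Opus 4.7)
The plan is to plug the specific arguments into the definition of $\theta_\gamma$, reduce to the large-argument regime of the tail $Q$, and then apply Mills' ratio to extract the stated expansion. Concretely, I would set $\gamma=e^\epsilon$ and $r=c/\sigma$ in (\ref{definition_function_theta}) and denote the two natural scales that appear by $a=\frac{c}{2\sigma}$ and $b=\frac{\epsilon\sigma}{c}$, so that
\begin{equation*}
\theta_{e^\epsilon}\!\l(\tfrac{c}{\sigma}\r)=Q(b-a)-e^{\epsilon}Q(a+b)=1-Q(a-b)-e^{\epsilon}Q(a+b),
\end{equation*}
using $Q(-x)=1-Q(x)$. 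The crucial algebraic observation that makes the expansion clean is the identity $ab=\epsilon/2$, which immediately gives
\begin{equation*}
\phi(a\mp b)=\tfrac{1}{\sqrt{2\pi}}e^{-a^{2}/2}\,e^{\pm\epsilon/2}\,e^{-b^{2}/2},
\qquad a^{2}=\tfrac{c^{2}}{4\sigma^{2}},\qquad e^{-b^{2}/2}=1+O(\sigma^{2}),
\end{equation*}
so both $\phi(a-b)$ and $\phi(a+b)$ carry the common factor $\tfrac{1}{\sqrt{2\pi}}e^{-c^{2}/(8\sigma^{2})}$, with a relative weight of $e^{\epsilon/2}$ and $e^{-\epsilon/2}$ respectively.

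Next I would apply Mills' ratio in the sharpened form $Q(t)=\phi(t)\bigl[\tfrac{1}{t}-\tfrac{1}{t^{3}}+O(t^{-5})\bigr]$, valid as $t\to\infty$, to both $t=a-b$ and $t=a+b$. Since $a=\Theta(1/\sigma)$ and $b=\Theta(\sigma)$, expanding $\frac{1}{a\mp b}=\frac{2\sigma}{c}(1\pm 2\epsilon\sigma^{2}/c^{2}+\cdots)=\frac{2\sigma}{c}+O(\sigma^{3})$, and noting that $\frac{1}{(a\mp b)^{3}}=O(\sigma^{3})$, yields
\begin{equation*}
Q(a-b)=\tfrac{e^{-c^{2}/(8\sigma^{2})}e^{\epsilon/2}}{\sqrt{2\pi}}\l(\tfrac{2\sigma}{c}+O(\sigma^{3})\r),\qquad
Q(a+b)=\tfrac{e^{-c^{2}/(8\sigma^{2})}e^{-\epsilon/2}}{\sqrt{2\pi}}\l(\tfrac{2\sigma}{c}+O(\sigma^{3})\r).
\end{equation*}
Multiplying the second by $e^{\epsilon}$ and adding, the two contributions combine symmetrically into the common factor $e^{\epsilon/2}$, giving
\begin{equation*}
Q(a-b)+e^{\epsilon}Q(a+b)=\tfrac{1}{\sqrt{2\pi}}\,e^{\epsilon/2}\,e^{-c^{2}/(8\sigma^{2})}\l(\tfrac{4\sigma}{c}+O(\sigma^{3})\r),
\end{equation*}
and substituting back into the displayed expression for $\theta_{e^{\epsilon}}(c/\sigma)$ produces exactly the claimed identity.

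The only subtlety worth watching is bookkeeping of the error terms: the $e^{-b^{2}/2}=1+O(\sigma^{2})$ factor and the $O(\sigma^{2})$ correction in $\frac{1}{a\mp b}$ both feed a multiplicative $(1+O(\sigma^{2}))$ onto the leading $\frac{2\sigma}{c}$, which turns into an additive $O(\sigma^{3})$; simultaneously, the discarded Mills' ratio term $\phi(t)/t^{3}$ is itself $O(\sigma^{3})\cdot\phi(t)$, of the same order. One must verify that all of these are absorbed into a single uniform $O(\sigma^{3})$ inside the parenthesis (with a constant independent of $\sigma$ for $\sigma$ small enough, depending only on $c$ and $\epsilon$). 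This is the main — and essentially the only — technical obstacle; once it is handled, the rest is straightforward substitution and cancellation driven by $ab=\epsilon/2$.
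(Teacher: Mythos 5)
Your proposal is correct and follows essentially the same route as the paper's proof: expand both $Q$-terms via the Mills-ratio asymptotics, use the cross term in the exponent (your identity $ab=\epsilon/2$, which the paper obtains by squaring out $\frac{\epsilon\sigma}{c}\mp\frac{c}{2\sigma}$) to pull out the common factor $e^{\epsilon/2}e^{-c^2/(8\sigma^2)}$, and absorb the $e^{-b^2/2}=1+O(\sigma^2)$ and $1/(a\mp b)$ corrections into the $O(\sigma^3)$ term. The only cosmetic difference is that you reflect $Q(b-a)=1-Q(a-b)$ and expand at positive arguments, while the paper uses the expansion of $Q$ at large negative arguments directly.
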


The behavior described in \Cref{thm_shuffled_fixed_gaussian} is confirmed by what we see in \Cref{fig_delta_shuffled_fixed_gaussian}, where we can also observe that there are different patterns of convergence for $\delta$, both from above and from below the $\delta^*$ defined in (\ref{delta_gaussian_shuffled}). In the right-bottom plot we also see a confirmation that the convergence rate is the one we expected, since $(\delta-\delta^*)^{-1}$ increase linearly with respect to $\log(n)$ when $n$ is sufficiently large (notice that the y-axis is rescaled by a factor $10^6$).

\section{Multiple Epochs Composition}

We now consider a simple yet important extension of the result in \Cref{thm_shuffled_pnsgd}, where the shuffled PNSGD is applied for multiple epochs. In real experiments, e.g. when training deep neural networks, usually multiple passes over the data are necessary to learn the model. In such scenario, the updates are not kept secret for the whole duration of the training, but are instead released at the end of each epoch. The result proved in \Cref{thm_shuffled_pnsgd} states that for each epoch the procedure is $(\epsilon,\delta)$-DP with $\delta \leq A\cdot (1 - B^n)/[n(1-B)]$. We can then easily combine these privacy bounds using state-of-the-art composition tools, such as the Moments Accountant \citep{abadi2016deep}, 
$f$-DP and Gaussian DP \citep{dong2019gaussian}. We present some popular ways to compute the privacy loss after $E$ epochs. 

At the high level, we migrate from $(\epsilon,\delta)$ in DP to other regimes, Gaussian DP or R\'enyi DP, at the first epoch. Then we compose in those specific regimes until the end of training procedure. At last, we map from the other regimes back to $(\epsilon,\delta)$-DP.

\textbf{$f$-DP and Gaussian DP:} At the first epoch, we compute the initial $(\epsilon,\delta)$ and derive the four-segment curve $f_{\epsilon,\delta}$ for the type I/II errors trade-off (see Equation (5) and Proposition 2.5 in \cite{dong2019gaussian}). Then by Theorem 3.2 in \cite{dong2019gaussian}, we can numerically compose this trade-off function with Fourier transform for $E$ times, which can be accelerated by repeated squaring.
When the noise is Gaussian, we can alternatively use $\mu$ in GDP to characterize the trade-off function (i.e. the mechanism is $\mu$-GDP after the first epoch). Next, we apply Corollary 3.3 in \cite{dong2019gaussian} to conclude that the mechanism is $\sqrt{E}\mu$-GDP in the end. We can compute the final $(\epsilon,\delta)$ reversely from GDP by Corollary 2.13 in \cite{dong2019gaussian}.

\textbf{Moments Accountant:} Moments Accountant is closely related to R\'enyi DP (RDP), which composes easily: at the first epoch, we compute the $(\epsilon,\delta)$ of our PNSGD. By Proposition 3 in \cite{mironov2017renyi}, we can transfer from $(\epsilon,\delta)$-DP to $(\alpha,\epsilon+\frac{\log \delta}{\alpha-1})$ RDP. After the first epoch, the initial RDP can be composed iteratively by Moments Accountant\footnote{See \url{https://github.com/tensorflow/privacy/blob/master/tensorflow_privacy/privacy/analysis/rdp_accountant.py}}. The final $(\alpha',\epsilon')$ RDP is then mapped back to $(\epsilon,\delta)$-DP with $\epsilon=\epsilon'-\frac{\log \delta}{\alpha'-1}$.


\section{Online Results for Decaying Noises}
\label{sec_online_errors}

We now go back to the original framework of \citet{asoodeh2020privacy} and consider the PNSGD algorithm applied to the non-shuffled dataset. This time, however, we want to apply a different level of noise for each update, and see if we can get a convergence result for $\delta$ when $n\to\infty$. 
We then need to consider values of $A$ and $B$ in (\ref{A_B_gaussian}) and (\ref{A_B_laplace}) that depend on the specific index, and the privacy bound for the PNSGD with non-fixed noises and neighboring datasets that differ on index $i$ becomes
\begin{equation}
\label{delta_PNSGD_decay_errors}
\delta = A_i \cdot \prod_{t=i+1}^n B_t
\end{equation}
Here the definition of $A_i$ and $B_i$ is the same as in (\ref{A_B_gaussian}) and (\ref{A_B_laplace}) but the noise level $v$ and $\sigma$ is now dependent on the position of each element in the dataset.
In this scenario we can actually imagine adding new data to the dataset in an online fashion, without having to restart the procedure to recalibrate the noise level used for the first entries. It is clear that, in order to get convergence, the decay of the injected noise should be faster than in \Cref{thm_shuffled_fixed_laplace} and \Cref{thm_shuffled_fixed_gaussian}, since now the early entries receive an amount of noise that does not vanish as $n$ becomes large. However it is interesting to notice that for both the Laplace and Gaussian noise the only difference needed with the decay rate for $v(n)$ and $\sigma(n)$ defined before is an exponent $\alpha > 1$.


\subsection{Laplace Noise}
\label{subsec_online_Laplace}

We prove here the online result for the PNSGD with Laplace noise that decays for each entry. As anticipated, the decay is no longer the same for all entries and proportional to $1/\log(n)$ but now for the entry with index $j$ we have a decay which is proportional to $1/\log(j^\alpha)$.

\begin{figure}[!htbp]
    \centering
  \begin{minipage}{0.5\linewidth}
  \centering
  \includegraphics[width=\linewidth]{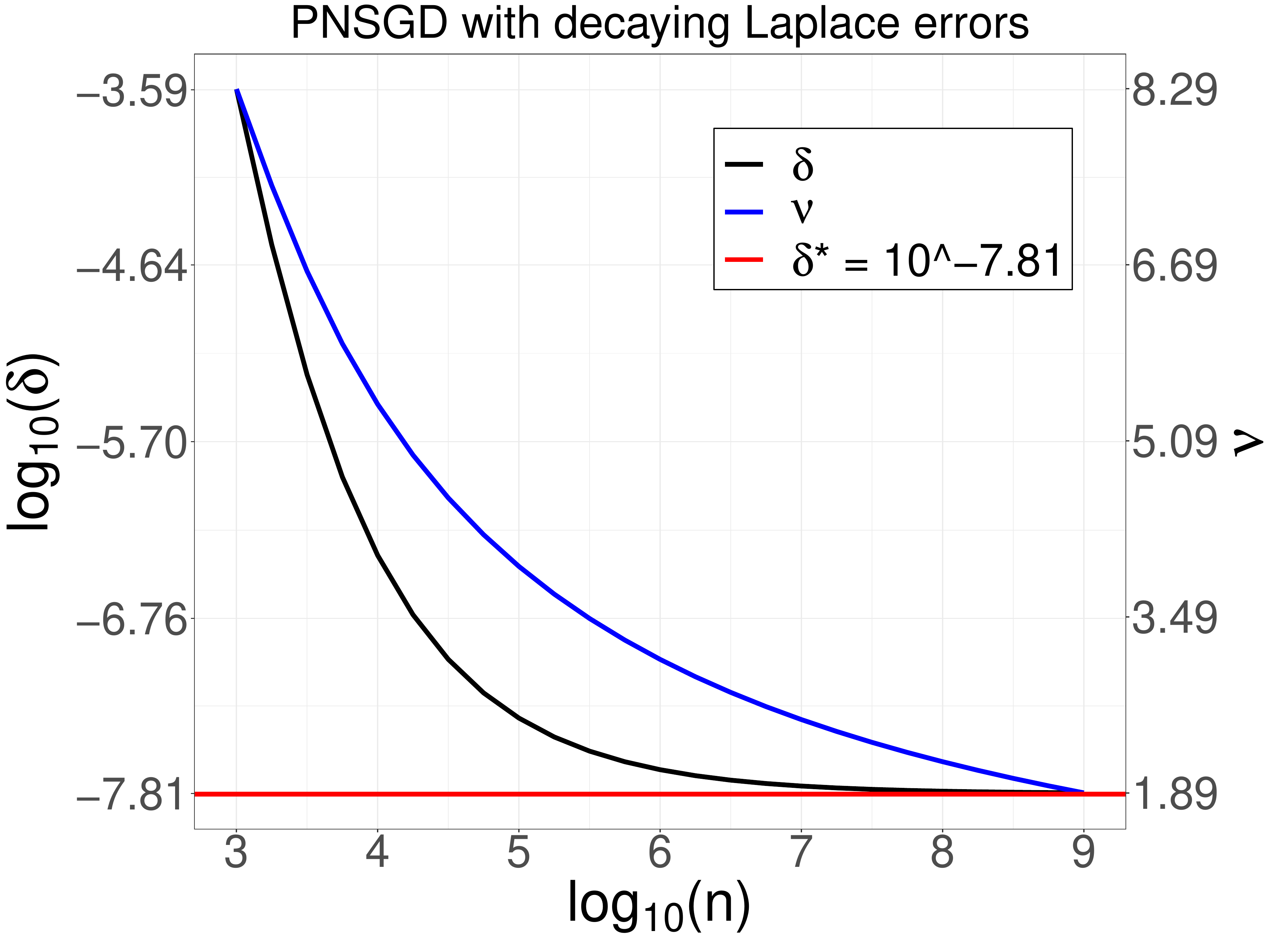}
  \end{minipage}%
  \hfill
  \begin{minipage}{0.49\linewidth}
  \centering
  \includegraphics[width=\linewidth]{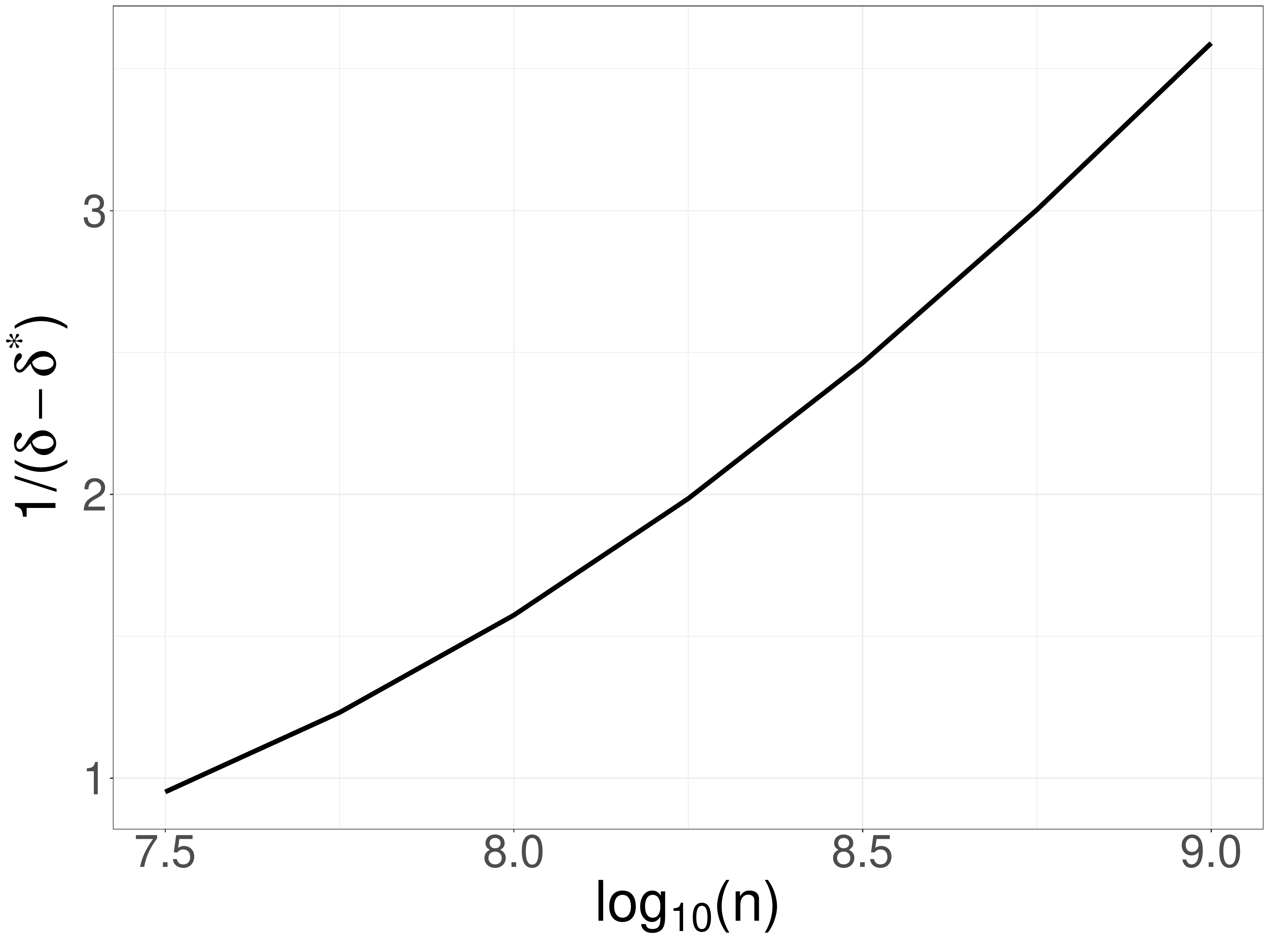}
  \end{minipage}
  \caption{(left) Convergence of $\delta$ to $\delta^*$ defined in (\ref{delta_convergence_laplace_decrease_v}). We report in black the behavior of $\delta$ and in blue that of $v_n$ defined in (\ref{v_laplace_decay}). The parameters considered are $L=10, \beta=0.5, \rho=0, \epsilon=1, \eta = 0.01, \alpha = 1.5, (a,b)=(0,1), i = 100, C_1=100$ and $C_2=100$. (right) The convergence rate is approximately $1/\log(n)$.}
\label{fig_delta_online_laplace}
\end{figure}

\begin{theorem}
\label{thm_online_laplace}
Consider the PNSGD where for update $j$ we use Laplace noise $\L(0, v_j)$, and 
\begin{equation}
\label{v_laplace_decay}
v_j = \frac{M(b-a)}{2\eta \log\l(j^\alpha/C_1 + C_2\r)}
\end{equation}
for $\alpha > 1$. Then as $n \to \infty$ the procedure is $(\epsilon, \delta^*)$-DP where
\vspace{-0.2cm}
\begin{equation}
\label{delta_convergence_laplace_decrease_v}
\delta^* =  \Big(1 - e^{\frac{\epsilon}{2} - \frac{2L\eta\log\l(i^\alpha/C_1 + C_2\r)}{M(b-a)}}\Big)_+ \exp\l\{{\int_{i+1}^\infty } \log\l( 1 - \frac{C_1 e^{\frac \epsilon2}}{x^\alpha + C_1 C_2}\r) dx\r\}
\end{equation}
and $i$ is the index where the neighboring datasets differ.
\end{theorem}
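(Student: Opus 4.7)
The plan is to start from the general per-index privacy bound (\ref{delta_PNSGD_decay_errors}), substitute the prescribed decaying noise scales $v_j$ from (\ref{v_laplace_decay}) into the Laplace expressions (\ref{A_B_laplace}), and then pass from the resulting finite product to the exponential of an improper integral by a monotone sum-versus-integral comparison as $n\to\infty$.

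First I would plug $v_j$ into (\ref{A_B_laplace}). A direct calculation yields $\frac{M(b-a)}{2\eta v_t} = \log\!\bigl(t^\alpha/C_1 + C_2\bigr)$, so that $e^{\epsilon/2 - M(b-a)/(2\eta v_t)} = C_1 e^{\epsilon/2}/(t^\alpha + C_1 C_2)$ and hence
$$B_t = \left(1 - \frac{C_1 e^{\epsilon/2}}{t^\alpha + C_1 C_2}\right)_+,$$
which for all sufficiently large $t$ (depending on $\epsilon, C_1, C_2$) is simply $1 - C_1 e^{\epsilon/2}/(t^\alpha + C_1 C_2)$. An analogous substitution in $A_i$ reproduces exactly the prefactor appearing in $\delta^*$, so all that remains is to control the tail product $\prod_{t=i+1}^n B_t$.

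Next, I would take logarithms and write $\log \prod_{t=i+1}^n B_t = \sum_{t=i+1}^n g(t)$ with $g(x) := \log\!\bigl(1 - C_1 e^{\epsilon/2}/(x^\alpha + C_1 C_2)\bigr)$. The key observation is that $g$ is negative and strictly increasing in $x$, since the argument of the logarithm increases monotonically toward $1$. By this monotonicity $g(t) \leq \int_t^{t+1} g(x)\,dx$ for every integer $t$, and summing gives
$$\sum_{t=i+1}^n g(t) \;\leq\; \int_{i+1}^{n+1} g(x)\,dx.$$
Since $\alpha > 1$ and $g(x) = O(x^{-\alpha})$ at infinity, the improper integral $\int_{i+1}^\infty g(x)\,dx$ is finite; letting $n\to\infty$ and exponentiating yields $\lim_n \prod_{t=i+1}^n B_t \leq \exp\!\bigl\{\int_{i+1}^\infty g(x)\,dx\bigr\}$. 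Multiplying by the $n$-independent factor $A_i$ gives the advertised $\delta^*$ as an upper bound on the limiting privacy parameter.

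The main obstacle is the sum-to-integral passage: one must get the direction of the monotone comparison right (which depends on $g$ being increasing, not decreasing, because of the negative sign inside the logarithm) and then justify interchanging $n\to\infty$ with the integral, which reduces to integrability of $g$ at infinity and requires exactly the hypothesis $\alpha > 1$. The remaining steps are algebraic bookkeeping analogous to the fixed-noise Laplace computation in \Cref{thm_shuffled_fixed_laplace}.
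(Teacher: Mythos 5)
Your proposal is correct and follows essentially the same route as the paper: substitute $v_t$ into (\ref{A_B_laplace}) to get $B_t = \bigl(1 - C_1 e^{\epsilon/2}/(t^\alpha + C_1 C_2)\bigr)_+$, take logarithms, and use the monotone increase of $\log\bigl(1 - C_1 e^{\epsilon/2}/(x^\alpha + C_1 C_2)\bigr)$ to dominate the sum by the improper integral, whose convergence is exactly where $\alpha > 1$ enters. The only cosmetic difference is that the paper additionally cites the classical criterion that $\prod(1-a_t)$ has a nonzero limit iff $\sum a_t < \infty$, which your integrability argument subsumes.
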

\begin{proof}
We show again that $\delta$ converges to a non-zero value as $n$ goes to $\infty$. In fact, again following the proof of (\citet{asoodeh2020privacy} Theorem 3), we get that,
\vspace{-0.2cm}
\begin{align*}
\delta &= \l(1 - e^{\frac{\epsilon}{2} - \frac{L}{v_i}}\r)_+ \cdot \prod_{t = i+1}^{n} \l(1 - e^{\frac \epsilon2 - \frac{M(b-a)}{2\eta v_t}}\r)_+ \\
&=  \Big(1 - e^{\frac{\epsilon}{2} - \frac{2L\eta\log(\frac{i^\alpha}{C_1} + C_2)}{M(b-a)}}\Big)_+ \prod_{t = i+1}^{n} \l(1 - \frac{C_1 e^{\frac \epsilon2}}{t^\alpha + C_1 C_2}\r)_+
\vspace{-0.3cm}
\end{align*}

We know that, for a sequence $a_t$ of positive values, $\prod_{t=1}^\infty (1 - a_t)$ converges to a non-zero number if and only if $\sum_{t=1}^\infty a_t$ converges. Here we have that 
$$\sum_{t = i+1}^{\infty} \frac{C_1 e^{\frac \epsilon2}}{t^\alpha + C_1 C_2} \leq \sum_{t = i+1}^{\infty} \frac{C_1 e^{\frac \epsilon2}}{t^\alpha}$$
and, since $\alpha > 1$ the right hand side converges, hence $\delta$ converges to a non-zero number. 
Let now $f(n) = \prod_{t = i+1}^{n} \l(1 - \frac{C_1 e^{\frac \epsilon2}}{t^\alpha + C_1 C_2}\r)_+$. To find the limit $f(\infty)$ we can first log-transform this function, and then upper bound the infinite sum with an integral before transforming back. Since $\log\l( 1 - \frac{C_1 e^{\frac \epsilon2}}{t^\alpha + C_1 C_2}\r)$ is monotonically increasing in $t$, we have
\begin{align*}
\log(f(n)) &= \sum_{t = i+1}^{n} \log\l( 1 - \frac{C_1 e^{\frac \epsilon2}}{t^\alpha + C_1 C_2}\r) \\
&< \int_{i+1}^n \log\l( 1 - \frac{C_1 e^{\frac \epsilon2}}{t^\alpha + C_1 C_2}\r)dt \\
&\to \int_{i+1}^\infty \log\l( 1 - \frac{C_1 e^{\frac \epsilon2}}{t^\alpha + C_1 C_2}\r)dt.
\end{align*}
This integral can be written in closed form using the hypergeometric function, or approximated numerically. 
\end{proof}

The convergence result that we get is slightly conservative, since $\delta^*$ in \Cref{delta_convergence_laplace_decrease_v} is an upper bound. However, following the previous proof, we can find an easy lower bound by just noticing that $\log(f(\infty)) >  \int_{i}^\infty \log\l( 1 - \frac{C_1 e^{\frac \epsilon2}}{t^\alpha + C_1 C_2}\r)dt$. When $i$ is not too small, the difference between the upper and lower bound is negligible, as it is confirmed by what we see in the left plot of \Cref{fig_delta_online_laplace}, where the convergence to the upper bound appears to be impeccable. Since the convergence is not exactly to $\delta^*$, we cannot find an explicit convergence rate the same way we did in \Cref{sec_asymptotic}. However, we see in the right plot of \Cref{fig_delta_online_laplace} that the convergence rate empirically appears to be $1/\log(n)$.

\subsection{Gaussian Noise}
\label{subsec_online_Gaussian}

When working with the Gaussian noises, the cumbersome form of the functions in (\ref{A_B_gaussian}) does not prevent us from finding a closed form solution for the limit $\delta^*$. Just as in the Laplace case we can find a conservative upper bound for $\delta^*$ which is very close to the true limit, as confirmed by the left plot of \Cref{fig_delta_online_gaussian}. Just as before, we notice again empirically from the right plot of \Cref{fig_delta_online_gaussian} that the convergence rate is $1/\log(n)$.

\begin{figure}[!htbp]
    \centering
  \begin{minipage}{0.49\linewidth}
  \centering
  \includegraphics[width=\linewidth]{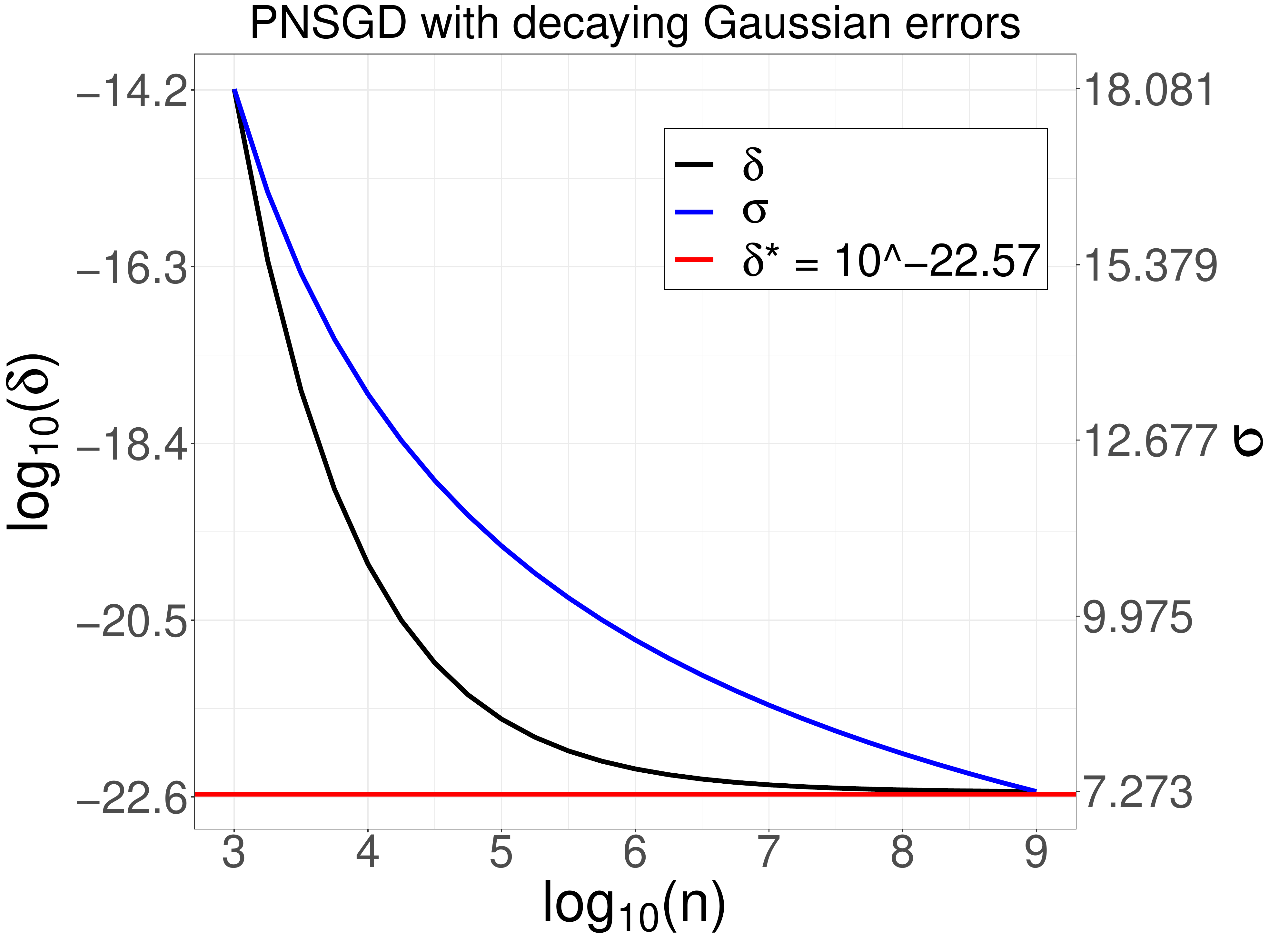}
  \end{minipage}%
  \hfill
  \begin{minipage}{0.49\linewidth}
  \centering
  \includegraphics[width=\linewidth]{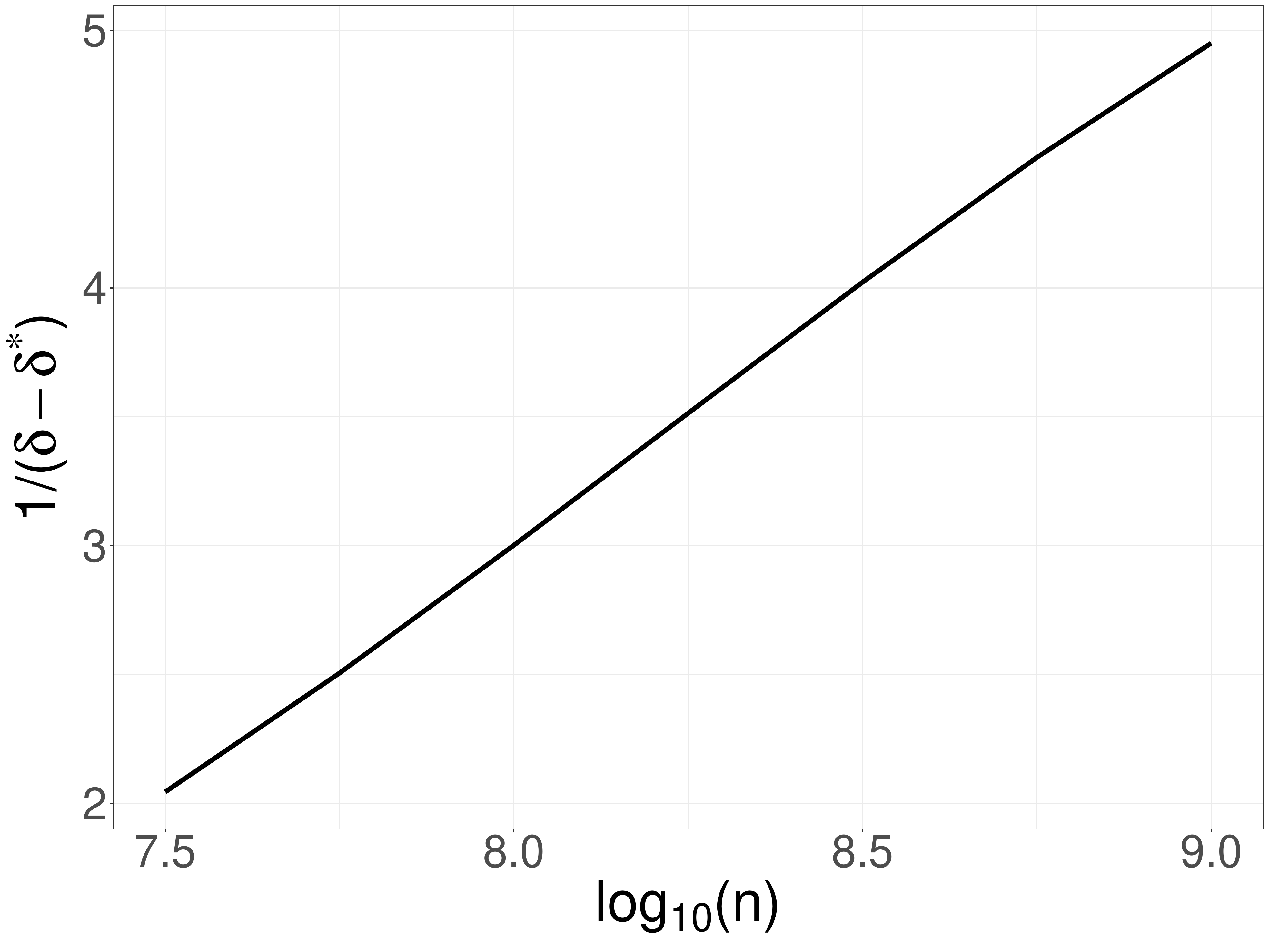}
  \end{minipage}
  \caption{(left) Convergence of $\delta$ to $\delta^*$ defined in (\ref{delta_convergence_gaussian_decrease_sigma}). We report in black the behavior of $\delta$ and in blue that of $\sigma_n$ defined in (\ref{sigma_gaussian_decay}). The parameters considered are $L=10, \beta=0.5, \rho=0, \epsilon=1, \eta = 0.01, \alpha = 1.5, D_\K=1, i = 100, C_1=100$ and $C_2=100$. (right) The convergence rate is approximately $1/\log(n)$.}
\label{fig_delta_online_gaussian}
\end{figure}


\begin{theorem}
\label{thm_online_gaussian}
Consider the PNSGD where for update $j$ we use Gaussian noise $N(0, \sigma^2_j)$, and 
\vspace{-0.2cm}
\begin{equation}
\label{sigma_gaussian_decay}
\sigma_j = \frac{MD_\K}{2\eta \sqrt{W\l(\frac{j^{2\alpha}}{2\pi C_1^2} + C_2\r)}}
\vspace{-0.3cm}
\end{equation}
for $\alpha > 1$. Then as $n \to \infty$ the procedure is $(\epsilon, \delta^*)$-DP where
\vspace{-0.2cm}
\begin{equation}
\label{delta_convergence_gaussian_decrease_sigma}
\delta^* =  \theta_{e^\epsilon}\Big(\frac{2L}{\sigma_i}\Big) \exp\l\{\int_{i+1}^\infty  \log\l(\theta_{e^\epsilon}\l(2 \sqrt{W\l(\frac{x^{2\alpha}}{2\pi C_1^2} + C_2\r)}\r)\r) dx \r\}
\end{equation}
and $i$ is the index where the neighboring datasets differ.
\end{theorem}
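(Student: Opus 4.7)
The plan is to mirror the Laplace argument of \Cref{thm_online_laplace}, using Lemma \ref{lem:simplify} together with the defining identity of the Lambert $W$ function in place of the closed-form expression $1-e^{\epsilon/2-M(b-a)/(2\eta v)}$ that was available there. First, I would substitute $\sigma_j$ from \eqref{sigma_gaussian_decay} into \eqref{delta_PNSGD_decay_errors} with $A_i$ and $B_t$ from \eqref{A_B_gaussian}. A direct calculation gives
\begin{equation*}
\frac{M D_\K}{\eta\, \sigma_t} \;=\; 2\sqrt{W\!\left(\tfrac{t^{2\alpha}}{2\pi C_1^2} + C_2\right)},
\end{equation*}
so that
\begin{equation*}
\delta \;=\; \theta_{e^\epsilon}\!\left(\tfrac{2L}{\sigma_i}\right) \prod_{t=i+1}^{n} \theta_{e^\epsilon}\!\left(2\sqrt{W\!\left(\tfrac{t^{2\alpha}}{2\pi C_1^2} + C_2\right)}\right),
\end{equation*}
which already matches the prefactor and integrand of \eqref{delta_convergence_gaussian_decrease_sigma}.

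Second, I would show that the infinite product converges to a strictly positive limit as $n\to\infty$. By the standard infinite-product criterion this reduces to proving $\sum_{t\ge i+1}(1-B_t)<\infty$. Setting $r_t := 2\sqrt{W_t}$ with $W_t := W\!\left(t^{2\alpha}/(2\pi C_1^2)+C_2\right)$, one has $r_t^2/8 = W_t/2$ and $4/r_t = 2/\sqrt{W_t}$, so Lemma \ref{lem:simplify} gives
\begin{equation*}
1-B_t \;=\; \tfrac{1}{\sqrt{2\pi}}\,e^{\epsilon/2}\, e^{-W_t/2}\!\left(\tfrac{2}{\sqrt{W_t}}+O(W_t^{-3/2})\right).
\end{equation*}
The Lambert identity $W(x)e^{W(x)}=x$ yields $e^{-W_t/2}=\sqrt{W_t/(t^{2\alpha}/(2\pi C_1^2)+C_2)}$, and after the $\sqrt{W_t}$ cancels this collapses to $1-B_t = 2C_1 e^{\epsilon/2}/t^\alpha + o(t^{-\alpha})$. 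Since $\alpha>1$ the sum is finite, so the product has a positive limit.

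Third, I would bound the log of the product by an integral, exactly as in the Laplace proof. The integrand $x\mapsto \log\theta_{e^\epsilon}(2\sqrt{W(x^{2\alpha}/(2\pi C_1^2)+C_2)})$ is negative and monotonically increasing toward $0$: both $W$ and the square root are monotone in $x$, and $\theta_{e^\epsilon}(r)$ is increasing in $r$ for $\epsilon\ge 0$ since it is an $E_{e^\epsilon}$-divergence between $N(0,1)$ and $N(r,1)$. Hence
\begin{equation*}
\sum_{t=i+1}^n \log B_t \;\le\; \int_{i+1}^{n+1}\!\log \theta_{e^\epsilon}\!\left(2\sqrt{W\!\left(\tfrac{x^{2\alpha}}{2\pi C_1^2}+C_2\right)}\right)dx,
\end{equation*}
and letting $n\to\infty$, exponentiating, and multiplying by the $A_i$ prefactor produces \eqref{delta_convergence_gaussian_decrease_sigma}.

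The main obstacle will be the asymptotic computation in step two: extracting the clean $O(t^{-\alpha})$ rate for $1-B_t$ requires chaining the small-$\sigma$ expansion of Lemma \ref{lem:simplify} with the Lambert identity and checking that the $O(\sigma^3)$ remainder is genuinely dominated by the leading term so that the product's convergence is preserved. The remaining ingredients (monotonicity of $\theta_{e^\epsilon}$, the sum-to-integral comparison) are routine and parallel the Laplace argument almost verbatim.
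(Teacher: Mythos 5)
Your proposal is correct and follows essentially the same route as the paper's proof: substitute \eqref{sigma_gaussian_decay} into \eqref{delta_PNSGD_decay_errors}, use Lemma \ref{lem:simplify} together with the identity $W(x)e^{W(x)}=x$ to show $1-B_t = 2C_1e^{\epsilon/2}/t^\alpha + o(t^{-\alpha})$ (hence convergence of the product for $\alpha>1$, exactly as in the Laplace case), and then bound the log-sum by the integral via monotonicity to obtain \eqref{delta_convergence_gaussian_decrease_sigma}. The only differences are cosmetic (your integral upper limit $n+1$ versus the paper's $n$, and your explicit justification of the monotonicity of $\theta_{e^\epsilon}$).
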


The proof of this result is in \Cref{appendix_proof_gaussian_online}, and makes use again of Lemma \ref{lem:simplify} to show that asymptotically the terms $B_t$ in (\ref{delta_PNSGD_decay_errors}) behave approximately as $1 - O(1/t^\alpha)$, so that convergence is guaranteed for the same reason as in \Cref{thm_online_laplace}.

\section{Conclusion}

In this work, we have studied the setting of privacy amplification by iteration in the formulation proposed by \citet{asoodeh2020privacy}, and proved that their analysis of PNSGD also applies to the case where the data are shuffled first. This is a much more common practice than the randomly-stopped PNSGD, originally proposed, because of a clear advantage in terms of accuracy of the algorithm.
We proved two asymptotic results on the decay rate of noises that we can use, either the Laplace or the Gaussian injected noise, in order to have asymptotic convergence to a non-trivial privacy bound when the size of the dataset grows.
We then showed that these practical bounds can be combined using standard tools from the composition literature.
Finally we also showed two result, again for Laplace or Gaussian noise, that can be obtained in an online setting when the noise does not have to be recalibrated for the whole dataset but just decayed for the new data.

\bibliographystyle{apalike}
\bibliography{my_bibliography}

\newpage
\appendix

\section{Proof of Lemma \ref{lem:simplify}}
\label{appendix_proof_lemma}

Recall from the definition \eqref{definition_function_theta}:
\begin{align}
\theta_{e^\epsilon}\l(\frac{c}{\sigma}\r) = Q\l(\frac{\epsilon\sigma}{c} - \frac c{2\sigma}\r) - e^\epsilon Q\l(\frac{\epsilon\sigma}{c} + \frac c{2\sigma}\r)
\label{eq:theta_temp}
\end{align}
We apply the following approximation of the normal cumulative density function, valid for large positive $x$,
\begin{align}
Q(x):=\frac{1}{\sqrt{2\pi}}\int_x^{\infty} e^{-\frac{u^2}{2}} du = \frac{1}{\sqrt{2\pi}}e^{-\frac{x^2}{2}} \l(\frac{1}{x} - \frac{1}{x^3} + \frac{3}{x^5} + \cdots\r) = \frac{1}{\sqrt{2\pi}}e^{-\frac{x^2}{2}} \l(\frac{1}{x} + O\l(\frac{1}{x^3}\r)\r)
\end{align}
and similarly, for large negative values of $x$
\begin{align*}
Q(x):=\frac{1}{\sqrt{2\pi}}\int_x^{\infty} e^{-\frac{u^2}{2}} du = 1 +\frac{1}{\sqrt{2\pi}}e^{-\frac{x^2}{2}}\l(\frac{1}{x} + O\l(\frac{1}{x^3}\r)\r) .
\end{align*}
Therefore \eqref{eq:theta_temp} can be reformulated as
\begin{align*}
\theta_{e^\epsilon}\l(\frac{c}{\sigma}\r) 
&= 1+\frac{1}{\sqrt{2\pi}}e^{-\frac 12\l(\frac{\epsilon^2\sigma^2}{c^2}+\frac{c^2}{4\sigma^2}\r)+\frac\epsilon2}\l(\frac{1}{\frac{\epsilon\sigma}{c}-\frac{c}{2\sigma}}+O\l(\frac{1}{\frac{\epsilon\sigma}{c}-\frac{c}{2\sigma}}\r)^3\r)
\\
&\quad
-\frac{1}{\sqrt{2\pi}}e^{\epsilon}e^{-\frac 12\l(\frac{\epsilon^2\sigma^2}{c^2}+\frac{c^2}{4\sigma^2}\r)-\frac\epsilon2}\l(\frac{1}{\frac{\epsilon\sigma}{c}+\frac{c}{2\sigma}}+O\l(\frac{1}{\frac{\epsilon\sigma}{c}+\frac{c}{2\sigma}}\r)^3\r)
\\
&=1-\frac{1}{\sqrt{2\pi}}e^{\frac{\epsilon}{2}}e^{-\frac 12\l(\frac{\epsilon^2\sigma^2}{c^2}+\frac{c^2}{4\sigma^2}\r)}\l(\frac{4\sigma}{c}+O(\sigma^3)\r)
\\
&=1-\frac{1}{\sqrt{2\pi}}e^{\frac{\epsilon}{2}}e^{-\frac{c^2}{8\sigma^2}}\l(\frac{4\sigma}{c}+O(\sigma^3)\r)
\end{align*}

\section{Proof of \Cref{thm_shuffled_fixed_gaussian}}
\label{appendix_proof_gaussian_fixed}

From Theorem \ref{thm_shuffled_pnsgd} we know that
\begin{equation}
	\label{delta_gaussian_error_implicit}
	\delta = \frac{\theta_{e^\epsilon}\l(\frac{2L}{\sigma(n)}\r) \cdot \l[1 - \theta_{e^\epsilon}\l(\frac{M D_{\mathbb{K}}}{\eta \sigma(n)}\r)^{n}\r]}{n \cdot \l[1 -  \theta_{e^\epsilon}\l(\frac{M D_{\mathbb{K}}}{\eta \sigma(n)}\r)\r]}
\end{equation}
We show that with $\sigma(n)$ that decays according to (\ref{sigma_gaussian_shuffled}) we have that 
$$\theta_{e^\epsilon}\l(\frac{2L}{\sigma(n)}\r)\to 1 \qquad\text{and}\qquad \theta_{e^\epsilon}\l(\frac{M D_{\mathbb{K}}}{\eta \sigma(n)}\r) \to 1-\frac{2C_1e^\frac{\epsilon}{2}}{n}.$$
Let's first focus briefly on the behavior of the Lambert W function. Formally, the Lambert W function is an implicit function defined as the inverse of $f(w) = w e^w$, meaning that for any $x$ one has $W(x)e^{W(x)}=x$. As an interesting fact, we note that the Lambert W function's behavior is approximately logarithmic, e.g. $\log(x) > W(x) > \log_4(x)$, where by $\log$ we denote the natural logarithm. We also denote the argument of the W Lambert function in $\sigma(n)$ as $x=\frac{n^2}{2 C_1^2 \pi} + C_2$.
Using this fact, an immediate consequence of Lemma \ref{lem:simplify} is that, when plugging in the $\sigma(n)$ from \eqref{sigma_gaussian_shuffled}, we get
$$\theta_{e^\epsilon}\l(\frac{2L}{\sigma(n)}\r)=1-o(\sigma^3)=1-o\l(\frac{1}{\sqrt{W^3(x)}}\r)=1-o\l(\frac{1}{(\log n)^{3/2}}\r)$$
since $e^{-\frac{c^2}{8\sigma^2}}\cdot\frac{1}{\sigma^2}\to 0$ as the exponential decays faster than the polynomial.
Next, we study $\theta_{e^\epsilon}\l(\frac{M D_{\mathbb{K}}}{\eta \sigma(n)}\r)$.
Again by Lemma \ref{lem:simplify}, we have
\begin{align}{\label{align_theta_gaussian_thm4}} \nonumber
\theta_{e^\epsilon}\l(\frac{M D_{\mathbb{K}}}{\eta \sigma(n)}\r)
&=1-\frac{1}{\sqrt{2\pi}}e^{\frac{\epsilon}{2}}e^{-\frac{M^2D_{\K}^2}{8\eta^2\sigma(n)^2}}\l(\frac{4\eta\sigma(n)}{MD_{\K}} + O(\sigma(n)^3)\r)
\\ \nonumber
&=1-\frac{1}{\sqrt{2\pi}}e^{\frac{\epsilon}{2}}e^{-\frac{W(x)}{2}}\l(\frac{2}{\sqrt{W(x)}} + O\l(\frac{1}{W(x)^{3/2}}\r)\r)
\\ \nonumber
&=1-\frac{2 e^{\frac{\epsilon}{2}}}{\sqrt{2\pi}} \frac{1}{\sqrt{e^{W(x)}W(x)}} + O\l(\frac{1}{\sqrt{e^{W(x)} W(x)^3}}\r)
\\ \nonumber
&=1-\frac{2 e^{\frac{\epsilon}{2}}}{\sqrt{2\pi x}} + O\l(\frac{1}{\sqrt{x} \log(x)}\r)
\\
&= 1 - \frac{2 C_1 e^{\frac{\epsilon}{2}}}{n} + O\l(\frac{1}{n \log(n)}\r)
\end{align}
Going back to the expression in (\ref{delta_gaussian_error_implicit}) we finally have that
\begin{align*}
\delta &= \frac{\l(1 - o\l(\frac{1}{(\log(n))^{3/2}}\r)\r)\l[1 - \l(1 - \frac{2C_1e^\frac{\epsilon}{2}}{n} + O\l(\frac{1}{n \log(n)}\r)\r)^{n}\r]}{n \cdot \l[1 - \l(1 - \frac{2 C_1 e^\frac{\epsilon}{2}}{n} + O\l(\frac{1}{n \log(n)}\r)\r)\r]} \\[3pt]
&=\frac{\l(1 - o\l(\frac{1}{(\log(n))^{3/2}}\r)\r)\l[1 - \l(1 - \frac{2C_1e^\frac{\epsilon}{2}+O(1/\log(n))}{n}\r)^{n}\r]}{2 C_1 e^\frac{\epsilon}{2} + O\l(\frac{1}{\log(n)}\r)}
\\
&=\frac{\l(1 - o\l(\frac{1}{(\log(n))^{3/2}}\r)\r)\l[1-e^{-2C_1e^\frac{\epsilon}{2}+O(1/\log(n))}\r]}{2 C_1 e^\frac{\epsilon}{2} + O\l(\frac{1}{\log(n)}\r)}
\\
&=\l(1 - o\l(\frac{1}{(\log(n))^{3/2}}\r)\r)\l[1-e^{-2C_1e^\frac{\epsilon}{2}}+O\l(\frac{1}{\log(n)}\r)\r]\l(\frac{1}{2C_1e^{\frac{\epsilon}{2}}}-O\l(\frac{1}{\log(n)}\r)\r)
\\
&= \frac{1 - e^{-2C_1e^\frac{\epsilon}{2}}}{2C_1e^\frac{\epsilon}{2}} + O\l(\frac{1}{\log(n)}\r)
\end{align*}
where the last equality holds because $ f(n)=O\l(\frac{1}{\log(n)}\r) + o\l(\frac{1}{(\log(n))^{3/2}}\r) = O\l(\frac{1}{\log(n)}\r)$.

\section{Proof of \Cref{thm_online_gaussian}}
\label{appendix_proof_gaussian_online}

This proof combines elements of the proofs of \Cref{thm_shuffled_fixed_gaussian} and \Cref{thm_online_laplace}.
We start by studying the behavior of $\theta_{e^\epsilon}\l(\frac{M D_{\mathbb{K}}}{\eta \sigma_t}\r)$ as $t$ grows. We define $x = \frac{t^{2\alpha}}{2\pi C_1^2} + C_2$ so that $\sigma_t = \frac{MD_\K}{2\eta \sqrt{W(x)}}$ and get, as in (\ref{align_theta_gaussian_thm4}),
\begin{align*}
\theta_{e^\epsilon}\l(\frac{M D_{\mathbb{K}}}{\eta \sigma_t}\r) &= 1-\frac{1}{\sqrt{2\pi}}e^{\frac{\epsilon}{2}}e^{-\frac{M^2D_{\K}^2}{8\eta^2\sigma_t^2}}\l(\frac{4\eta\sigma_t}{MD_{\K}} + O(\sigma_t^3)\r) \\
&= 1-\frac{2 e^{\frac{\epsilon}{2}}}{\sqrt{2\pi x}} + O\l(\frac{1}{\sqrt{x} \log(x)}\r) \\
&= 1 - \frac{2 C_1 e^{\frac{\epsilon}{2}}}{t^\alpha} + O\l(\frac{1}{t^\alpha \log(t)}\r)
\end{align*}
This already confirms us that $\delta^*$ converges to a finite non zero value, since the asymptotic behavior of each term in the infinite product is the same as in the Laplace case. To express such limit in a more tractable way we follow the proof of \Cref{thm_online_laplace} and write $f(n) = \prod_{t=i+1}^n \theta_{e^\epsilon}\l(\frac{M D_{\mathbb{K}}}{\eta \sigma_t}\r)$ and approximate the infinite sum $\log(f(\infty))$ with an integral.

\begin{align*}
\log(f(n)) &= \sum_{t = i+1}^{n} \log\l(\theta_{e^\epsilon}\l(\frac{M D_{\mathbb{K}}}{\eta \sigma_t}\r)\r) \\
&= \sum_{t = i+1}^{n} \log\l(\theta_{e^\epsilon}\l(2 \sqrt{W\l(\frac{t^{2\alpha}}{2\pi C_1^2} + C_2\r)}\r)\r) \\
&< \int_{i+1}^n  \log\l(\theta_{e^\epsilon}\l(2 \sqrt{W\l(\frac{x^{2\alpha}}{2\pi C_1^2} + C_2\r)}\r)\r) dx \\
&\to \int_{i+1}^\infty  \log\l(\theta_{e^\epsilon}\l(2 \sqrt{W\l(\frac{x^{2\alpha}}{2\pi C_1^2} + C_2\r)}\r)\r) dx \\[3pt]
\end{align*}
This confirms us that 
\begin{equation*}
\delta^* =  \theta_{e^\epsilon}\Big(\frac{2L}{\sigma_i}\Big) \cdot \exp\l\{\int_{i+1}^\infty  \log\l(\theta_{e^\epsilon}\l(2 \sqrt{W\l(\frac{x^{2\alpha}}{2\pi C_1^2} + C_2\r)}\r)\r) dx \r\} .
\end{equation*}

\end{document}